\newcommand{\sbal}{{S_{\rm b}}}
\newcommand{\sbar}{\bar{S}}
\newcommand{\abar}{\bar{a}}
\newcommand{\ax}{\alpha^{\rm resil}}
\newcommand{\ay}{\alpha^{\rm sus}}
\newcommand{\accord}{\in}
\newcommand{\sss}{\mathcal{S}}
\newcommand{\FID}{{\rm FI}}
\newcommand{\UR}{{\rm UR}}
\newcommand{\FI}{{\rm FI}}
\newcommand{\MI}{{\rm MI}}
\newcommand{\type}{{\rm {\bf type}}}
\newcommand{\aaa}{\mathcal{A}}
\newcommand{\geer}{\mathcal{G}^{\rm r}}
\newcommand{\eps}{\epsilon}
\newcommand{\N}{\mathbb{N}}
\newcommand{\R}{\mathbb{R}}
\newcommand{\st}{:}
\newtheorem{theorem}{Theorem}
\newtheorem{defn}{Definition}
\newtheorem{lemma}{Lemma}[theorem]
\newtheorem{corollary}[theorem]{Corollary}
\normalsize\title{\LARGE \bf
Security Against Impersonation Attacks in Distributed Systems
\thanks{This research was supported by ONR grant \#N00014-17-1-2060 and NSF grant \#ECCS-1638214.}}
\author{
Philip N. Brown,
Holly P. Borowski, and
Jason R. Marden\thanks{P. N. Brown (corresponding author) and J. R. Marden are with the Department of Electrical and Computer Engineering, University of California, Santa Barbara, \texttt{\{pnbrown,jrmarden\}@ece.ucsb.edu}.}
\thanks{H. P. Borowski is a research scientist at Numerica Corporation, \texttt{hollyboro@gmail.com}}
\thanks{A preliminary version of this paper appeared in~\cite{Borowski2015}.}}
\begin{document}
\graphicspath{{figures/}}

\maketitle

\begin{abstract}
In a multi-agent system, transitioning from a centralized to a distributed decision-making strategy can introduce vulnerability to adversarial manipulation.
We study the potential for adversarial manipulation in a class of graphical coordination games where the adversary can pose as a friendly agent in the game, thereby influencing the decision-making rules of a subset of agents.
The adversary's influence can cascade throughout the system, indirectly influencing other agents' behavior and significantly impacting the emergent collective behavior.
The main results in this paper focus on characterizing conditions under which the adversary's local influence can dramatically impact the emergent global behavior, e.g., destabilize efficient Nash equilibria.

\end{abstract}

\section{Introduction}

Engineering and social systems often consist of many agents making decisions based on locally available information.
For example, a group of unmanned aircraft performing surveillance in a hostile area may use a distributed control strategy to limit communication and thus remain undetected.
Social systems are inherently distributed: individuals typically make decisions based on personal objectives and in response to the behavior of friends and acquaintances.
For example, the decision to adopt a recently released technology, such as a new smartphone, may depend both on the quality of the item itself and on friends' choices~\cite{Kreindler2014}.

In a distributed multiagent system, a system designer endows each agent with a set of decision-making rules; once the system is in operation, the designer is no longer available to modify agent algorithms in response to changing conditions.
Thus, an adversary that arrives on the scene while the system is operating may be able to exploit the agents' decision-making algorithms to influence the overall behavior of the system.

Much work in the area of security in cyber-physical systems has focused on reducing the potential impact of adversarial interventions through detection mechanisms: detection of attacks in power networks, estimation and control with corrupt sensor data, and monitoring \cite{Hendrickx2015,Fawzi2014, Bai2014,Pasqualetti2012,Hota2017}.
In contrast to this research, our work focuses on characterizing the impact an adversary may have on distributed system dynamics when no mitigation measures are in place.

We use graphical coordination games, introduced in \cite{Cooper1999, Ullmann-Margalit1977}, to study the impact of adversarial manipulation.
The foundation of a graphical coordination game is a simple two agent coordination game, where each agent must choose one of two alternatives, $\{x,y\}$, with payoffs depicted by the following payoff matrix which we denote by $u(\cdot)$: 

\vspace{-2mm}
\begin{equation}
\setlength{\extrarowheight}{2pt}
\mbox{
\begin{tabular}{r|c|c|}
\multicolumn{1}{r}{}	&\multicolumn{1}{c}{$x$}	&\multicolumn{1}{c}{$y$}\\
\cline{2-3}$x$			&$1+\alpha,\,1+\alpha$	&$0,\,0$\\
\cline{2-3}$y$			&$0,\,0$				&$1,\,1$\\
\cline{2-3}
\end{tabular}}\label{eq:matrix}
\end{equation}
%\vspace{2mm}
%
Both agents prefer to agree on a convention, i.e., $(x,x)$ or $(y,y)$, than disagree, i.e., $(x,y)$ or $(y,x)$, with a preference for agreeing on $(x,x)$.
The parameter $\alpha > 0$ indicates that $(x,x)$ has an intrinsic advantage has over $(y,y)$; we refer to $\alpha$ as the \emph{payoff gain}.
Nonetheless, unilaterally deviating from $(y,y)$ for an individual agent incurs an immediate payoff loss of $1$ to $0$; hence, myopic agents may be reluctant to deviate, stabilizing the inefficient equilibrium $(y,y)$.

This two player coordination game can be extended to an $n$-player \emph{graphical coordination game}\cite{Kearns2001,Young2011, Montanari2010}, where the interactions between the agents $N=\{1, 2, \dots, n\}$ are described by an undirected graph $G = (N,E)$, where an edge $(i,j)\in E$ for some $i\neq j$ indicates that agent $i$ is playing the two-player coordination game~\eqref{eq:matrix} with agent $j$.
An agent's total payoff is the sum of payoffs it receives in the two-player games played with its neighbors ${\cal N}_i = \{j \in N : (i,j) \in E\}$, i.e., for a joint action $a = (a_1, \dots, a_n) \in \{x,y\}^n$, the utility function of agent $i$ is
\begin{equation}\label{e:original utility}
U_i(a_1, \dots, a_n) = \sum_{j \in {\cal N}_i} u(a_i, a_j),
\end{equation}
where $u(\cdot)$ is chosen according to payoff matrix~\eqref{eq:matrix}.
Joint actions $\vec{x}:=(x,x,\ldots,x)$ and $\vec{y}:=(y,y,\ldots,y)$, where either all players choose $x$ or all players choose $y$, are Nash equilibria of the game for any graph;
other equilibria may also exist depending on the structure of graph $G.$

The system operator's goal is to endow agents with decision-making rules to ensure that for any realized graph $G$ and payoff gain $\alpha$, the emergent behavior maximizes the sum of agents' utilities.
Log-linear learning \cite{Blume1993, Shah2010} is one distributed decision making rule that selects the efficient equilibrium in this setting.%
\footnote{
It is worth mentioning that even in more complicated graphical coordination games, log-linear learning selects the optimal action profile as stochastically stable.
For example, in a graphical coordination game if some agents' actions are fixed at $y$, some action profile other than $\vec{x}$ might be optimal, and log-linear learning would select it.
Thus, our paper considers an important special case of a more general model, and we expect that the lessons learned here will inform work on more general settings.
} 
Although agents predominantly maximize their utilities under log-linear learning, selection of the efficient equilibrium is achieved by allowing agents to choose suboptimally with some small probability, specified in Section~\ref{sec:model} in~\eqref{e:LLL dynamics}.

Log-linear learning has been studied for prescribing control laws in many distributed engineering systems \cite{Marden2013,Zhu2009,Goto2010,Staudigl2012,Fox2010}, as well as for analyzing the emergence of conventions in social systems \cite{Young1993,Shah2010}.
The equilibrium selection properties of log-linear learning extend beyond coordination games to the class of potential games \cite{Monderer1996}, which often can be used to model engineering systems where the efficient Nash equilibrium is aligned with the optimal system behavior \cite{Marden2007,Marden2013,Wolpert2001}.
This prompts the question: can adversarial manipulation  alter the emergent behavior of log-linear learning in the context of graphical coordination games (or more broadly in distributed engineering systems)?

We assume that an adversary wishes to influence agents to play the less efficient Nash equilibrium~$\vec{y}$.
In this context, the adversary could model a political rival, attempting to sway voters to adopt an alternative political viewpoint; or, the adversary could model an advertiser that wishes to sway consumers to purchase a competing product.
We model the adversary as additional nodes/edges in the graph, where the new node plays a fixed action $y$ in an effort to influence agents' utilities and provide them an additional incentive to play $y$.

We investigate the tradeoffs between the amount of information available to an adversary, the policies at the adversary's disposal, and the adversary's resulting ability to stabilize the alternative Nash equilibrium~$\vec{y}$.
For concreteness, we perform this analysis by specifying three distinct styles of adversarial behavior:
\begin{itemize}
\item \emph{Uniformly Random:} This type of adversary influences a random subset of agents at each time step.
Uniformly random adversaries have the least amount of information available, essentially only requiring knowledge of $n$. 
\item \emph{Fixed Intelligent}: This type of adversary chooses a subset of agents to influence; this subset is fixed for all time. 
Fixed intelligent adversaries know the entire graph structure~$G$.
\item \emph{Mobile Intelligent}: This type of adversary can choose which agents to influence as a function of the current joint action profile.
Thus, mobile intelligent adversaries know the graph structure, and at each time step, the action choices of all agents. 
\end{itemize}

Our results include an initial study on the influence of uniformly random and fixed intelligent agents on general graphs, as well as a complete characterization of each adversary's ability to stabilize~$\vec{y}$ in a ring graph.

\section{Model}\label{sec:model}

\subsection{Model of agent behavior}

Suppose agents in $N$ interact according to the graphical coordination game above, specified by the tuple $(G,\alpha)$, with underlying graph $G = (N,E)$, alternatives $\{x,y\}$, and payoff gain $\alpha\in\mathbb{R}$.
We denote the joint action space by $\mathcal{A} = \{x,y\}^n,$ and we write
$$(a_i,a_{-i}) = (a_1,a_2,\ldots,a_i,\ldots,a_n)\in\mathcal{A}$$
when considering agent $i$'s action separately from other agents' actions.

A special type of graph considered in some of our results is a \emph{ring} graph, defined as follows.
Let ${G} = (N,E)$ with $N = \{1,2,\ldots,n\}$ and $E = \left\{\{i,j\}:j = i+1 \mbox{ mod } n\right\},$ i.e., ${G}$ is a ring (or cycle) with $n$ nodes.%
\footnote{
When considering ring graphs, all addition and subtraction on node indices is assumed to be ${\rm mod }\ n$.
}
We denote the set of all ring graphs by $\geer$.

Now, suppose agents in $N$ update their actions according to the \emph{log-linear learning} algorithm at times $t = 0,1,\ldots,$ producing a sequence of joint actions $a(0),a(1),\ldots.$
We assume agents begin with joint action, $a(0)\in\mathcal{A}$, and let $a(t) = (a_i,a_{-i})\in\mathcal{A}.$
At time $t\in\mathbb{N}$, an agent $i\in N$ is selected uniformly at random to update its action for time $t+1$; all other agents' actions will remain fixed.
Agent $i$ chooses its next action probabilistically according to:%
\footnote{Agent $i$'s update probability is also conditioned on the fact that agent $i$ was selected to revise its action, which occurs with probability $1\mathop{/}n$.
For notational brevity we omit this throughout, and  $\Pr[a_i(t+1) = A\,|\, a_{-i}(t) = a_{-i}]$, for example, is understood to mean $\Pr[a_i(t+1) = x\,|\, a_{-i}(t) = a_{-i},\, i \text{ selected for update}].$
}
\begin{align}
\Pr[a_i(t+1) = x\,|\, a_{-i}(t) = a_{-i}] %\nonumber \\
= {\exp \left(\beta \cdot U_i(x,a_{-i})\right)\over \exp \left(\beta \cdot  U_i(x,a_{-i}\right)+\exp \left(\beta \cdot U_i(y,a_{-i})\right)}.\label{e:LLL dynamics}
\end{align} 
Parameter $\beta>0$ dictates an updating agent's degree of rationality and is identical for all agents $i\in N$. 
As $\beta\to \infty$, agent $i$ is increasingly likely to select a utility-maximizing action, and as $\beta\to 0$, agent $i$ tends to choose its next action uniformly at random.
The joint action at time $t+1$ is $a(t+1) = (a_i(t+1),a_{-i}(t)).$

Joint action $a\in\mathcal{A}$ is \emph{strictly stochastically stable} \cite{Foster1990} under log-linear learning dynamics if, for any $\eps >0$, there exist $B<\infty$ and $T<\infty$ such that
\begin{equation} \label{eq:weirdSSS}
\Pr[a(t)=a] > 1-\eps,\quad\text{for all } \beta > B, t>T
\end{equation}
where $a(t)$ is the joint action at time $t\in\N$ under log-linear learning dynamics.

Joint action $\vec{x}$ is strictly stochastically stable under log-linear learning for any graphical coordination game whenever $\alpha>0$\cite{Blume1993}.
We will investigate conditions when an adversary can destabilize $\vec{x}$ and stabilize the alternative coordinated equilibrium~$\vec{y}$.

\subsection{Model of adversarial influence}
Consider the situation where agents in $N$ interact according to the graph $G$ and update their actions according to log-linear learning, and an adversary seeks to  convert as many agents in $N$ to play action $y$ as possible.
At each time $t\in \N$ the adversary influences a set of agents $S(t)\subseteq N$ by posing as a friendly agent who always plays action $y$.
Agents' utilities, $\tilde{U}:\mathcal{A}\times 2^N\to\R$, are now a function of adversarial and friendly behavior, defined by:
\begin{equation}\label{e:new utility}
\tilde{U_i}((a_i,a_{-i}),S) = %U_i(a_i,a_{-i}) + \mathds{1}_{i\in S, a_i = y}
\begin{cases}
U_i(a_i,a_{-i})	&\text{if } i\notin S\\
U_i(a_i,a_{-i})	&\text{if } a_i = x\\
U_i(a_i,a_{-i}) + 1 &\text{if } i\in S,  a_i = y 
\end{cases}
\end{equation}
where $(a_i,a_{-i})\in\mathcal{A}$ represents friendly agents' joint action, and \emph{influence set} $S\subseteq N$ represents the set of agents influenced by the adversary.
If $i\in S(t)$, agent $i$ receives an additional payoff of 1 for coordinating with the adversary at action $y$ at time $t\in\N$; that is, to agents in $S(t)$, the adversary appears to be a neighbor playing action $y$.
By posing as a player in the game, the adversary can manipulate the utilities of agents belonging to $S$, providing an extra incentive to choose the inferior alternative~$y$.

Throughout, we write $k$ to denote the number of friendly agents the adversary can connect to, called the adversary's \emph{capability}.
Given $k$, $\sss_k:=\left\{ S\in2^N, |S|=k \right\}$ denotes the set of all possible influence sets.
In this paper, we consider three distinct models of adversarial behavior, which we term fixed intelligent (FI), mobile intelligent (MI), and uniformly random (UR).
To denote a situation in which influence sets $S(t)$ are chosen by an adversary of $\type\in\{\rm \FI,\MI,\UR\}$ for a given $k$, we write $S\accord\type(k)$.

If an adversary is fixed intelligent (FI), this means that the influence set $S$ is a function only of the graph structure and $\alpha$.
That is, the adversary must commit to an influence set $S$ that is fixed for all time (in the following, note that $S$ is always implicitly assumed to be a function of $G$):
\begin{equation}
S\accord\FI(k) \implies  S(t) = S. \label{eq:fi}
\end{equation}
If an adversary is mobile intelligent (MI), this means that the influence set $S(a)$ is a function of the graph structure, $\alpha$, \emph{and} $a(t)$, the state at time $t$:
\begin{equation}
S\accord\MI(k) \implies  S(t) = S(a(t)). \label{eq:mi}
\end{equation}
Note that type-MI adversaries have the freedom to choose a mapping $S:\aaa\rightarrow\sss_k$, whereas a type-FI adversary must choose that mapping to be a constant function of state $a$.
Finally, if the adversary is uniformly random (UR), the influence set $S$ at each time $t$ is chosen uniformly at random from $\sss_k$, independently across time:
\begin{equation}
S\accord\UR(k) \implies  S(t) \sim {\rm unif}\{\sss_k\}. \label{eq:ur}
\end{equation}

\subsection{Resilience metrics}

Given nominal game $(G,\alpha)$, adversary policy $S\accord\type(k)$ for some $\type$ and $k\geq1$, we write the set of stochastically stable states associated with log-linear learning as
\begin{equation}
{\rm SS}\left(G,\alpha,S\right).
\end{equation}

To quantify how the payoff gain $\alpha$ affects various adversary types' ability to influence behavior, we say that a game $(G,\alpha)$ is \emph{resilient} to adversarial influence of $\type(k)$ if for every $S\in\type(k)$, $SS(G,\alpha,S)=\vec{x}$.
A quantity measuring the resilience of a particular graph structure $G$ is then
\begin{equation}
\ax(G,\type(k)) \triangleq  \inf \left\{\alpha : (G,\alpha) \mbox{ is resilient to }\type(k)\right\},
\end{equation}
so that whenever $\alpha>\ax(G,\type(k))$, then $\vec{x}$ is the strictly stochastically-stable state regardless of what a $\type(k)$-adversary does.

Similarly, we say a game $(G,\alpha)$ is \emph{susceptible} to adversarial influence of $\type(k)$ if there exists a policy $S\in\type(k)$ for which $SS(G,\alpha,S)=\vec{y}$.
A quantity measuring the susceptibility of a particular graph structure $G$ is then
\begin{equation}
\ay(G,\type(k)) \triangleq  \sup \left\{\alpha : (G,\alpha) \mbox{ is suscep. to }\type(k)\right\},
\end{equation}
so that whenever $\alpha<\ay(G,\type(k))$, then by employing the right policy, a $\type(k)$-adversary can ensure that $\vec{y}$ is strictly stochastically stable.
Note that for a fixed graph $G$ and $\type(k$), it is always true that $\ay(G,\type(k))\leq\ax(G,\type(k))$.

\subsection{Summary of results}

In the following sections, we provide an initial study into the effects of adversarial influence on emergent behavior in graphical coordination games.
We give conditions under which a fixed intelligent adversary can stabilize $\vec{y}$ (or be prevented from doing so) in a general graph, as well as a general theorem characterizing the impact of $k$ on a general graph's susceptibility to a uniformly random adversary.

Since the analysis for general graphs rapidly becomes intractable, we then derive exact susceptibilities to each type of adversary for ring graphs.
These susceptibilities for ring graphs are summarized below.
\begin{align*}
\ay\left(G;\FI(k)\right) 	&= \frac{k}{n}, 		&k\leq n 		&&\mbox{(Theorem~\ref{thm:ringFI})} \\
\ay\left(G;\UR(k)\right) 	&= \frac{1}{2},		&1\leq k\leq n-1	&&\mbox{(Theorem~\ref{thm:mr-ring})}\\
\ay\left(G;\MI(k)\right) 	&= \frac{k}{k+1},	&k\in\{1,2\}	&&\mbox{(Theorem~\ref{thm:MIring})}\\
\ay\left(G;\MI(k)\right) 	&= \frac{n-1}{n},	&3\leq k\leq n-1&&\mbox{(Theorem~\ref{thm:MIring})}
\end{align*}

These results are also depicted in Figure~\ref{fig:comparison} for $n=10$.
Naturally, for a given $k$, mobile intelligent adversaries are always at least as effective as fixed intelligent adversaries.
However, the intuition behind susceptibility to uniformly-random adversaries is less obvious, as Figure~\ref{fig:comparison} suggests that $k$ has no impact on the effectiveness of this type of adversary.
Indeed, our Theorem~\ref{thm:ur-gen} shows that this is a generic fact: the susceptibility of any graph to uniformly-random adversaries is constant for all $k\in\{1,\ldots,n-1\}$.

\section{Fixed intelligent adversarial influence} \label{sec:fi}

In the fixed intelligent model of adversarial behavior with capability $k$, the adversary chooses a fixed subset $S\in\sss_k$ of agents to influence, as in~\eqref{eq:fi}.
In a sense, this is the type of adversary that is most limited, as the adversary has no ability to react to changing conditions as the agents update their actions.
Nonetheless, as can be seen from Figure~\ref{fig:comparison}, fixed intelligent adversaries actually can outperform uniformly-random adversaries if $k$ is sufficiently large.

\begin{figure}
% this figure generated by https://github.ucsb.edu/philipbrown/adversary-python/blob/master/plotscript.py
  \centering
    \includegraphics[scale=.25]{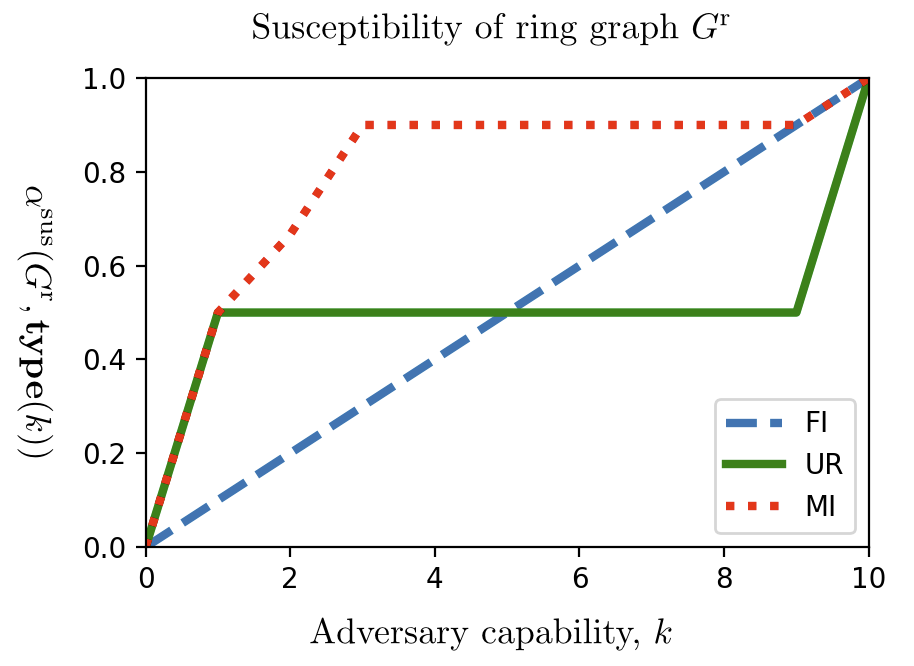}
  \caption{Values of $\alpha$ below which each type of adversary can stabilize joint action $\vec{y}$ in a $10$-agent ring graph as a function of adversary capability $k$.
  For uniformly random (UR) adversaries, this shows that $k=1$ yields the same susceptibility as $k=n-1$, but that for the intelligent types of adversaries, susceptibility depends strongly on $k$.}
  \label{fig:comparison}
\end{figure}

\subsection{General graphs}

We begin with a theorem which provides tools to compute an exact characterization of a fixed intelligent adversary's influence on a general graph.
Theorem~\ref{thm:genFI} gives necessary and sufficient conditions that a graph is either resilient against or susceptible to a given influence set $S$.
We require the following definition:
\begin{equation}
d(S,T) := |\{\{i,j\}\in E\st i\in S, j\in T\}|.
\end{equation}

\begin{theorem}\label{thm:genFI}
For any graph $G$, let a fixed intelligent adversary influence set $S\in\sss_k$.
\begin{enumerate}
\item Joint action $\vec{x}$ is strictly stochastically stable if and only if $\forall T\subseteq N,\,T\neq\emptyset$,
\begin{equation}
\alpha > {|T\cap S| - d(T,N\setminus T)\over d(T,N)}. \label{eq:thm1}
\end{equation}

\item Joint action $\vec{y}$ is strictly stochastically stable if and only if $\forall T\subset N,\,T\neq N$,
\begin{equation}\label{eq:thm2}
\alpha < {d(T,N\setminus T) + k - |T\cap S|\over d(N\setminus T,N\setminus T)}.
\end{equation}
\end{enumerate}
\end{theorem}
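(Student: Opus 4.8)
The plan is to recognize the adversarially-modified game (for a \emph{fixed} influence set $S$, which keeps the dynamics a time-homogeneous Markov chain on $\mathcal{A}$) as an exact potential game, and then to invoke the standard characterization of stochastic stability under log-linear learning as unique maximization of the potential. Concretely, I would introduce
\[
\Phi(a,S) \;=\; \sum_{\{i,j\}\in E} u(a_i,a_j) \;+\; \big|\{i \in S : a_i = y\}\big|,
\]
and first verify that $\Phi$ is an exact potential for the utilities $\tilde U_i$ in~\eqref{e:new utility}: when a single agent $i$ revises its action, the only edge terms of $\Phi$ that change are those incident to $i$, and by symmetry of $u$ these reproduce exactly the change in $\sum_{j\in\mathcal{N}_i} u(a_i,a_j)$, while the only adversary term that changes is the one for $i$, reproducing the $+1$ bonus in~\eqref{e:new utility} precisely when $i\in S$. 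Since log-linear learning on a finite potential game is reversible with stationary distribution proportional to $\exp(\beta\Phi)$, the mass concentrates as $\beta\to\infty$ on $\argmax_a \Phi(a,S)$; hence a joint action is strictly stochastically stable if and only if it is the \emph{unique} maximizer of $\Phi(\cdot,S)$. This is the same principle from~\cite{Blume1993} already used to assert stability of $\vec x$ when $\alpha>0$.

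Second, I would evaluate $\Phi$ in closed form. Writing $T=\{i:a_i=y\}$ for the set of agents playing $y$, the edges split into those internal to $N\setminus T$ (both play $x$, contributing $1+\alpha$ each), those internal to $T$ (both play $y$, contributing $1$ each), and the cut edges (contributing $0$); adding the adversary term gives
\[
\Phi(a,S) \;=\; (1+\alpha)\,d(N\setminus T,\,N\setminus T) \;+\; d(T,T) \;+\; |T\cap S|,
\]
so in particular $\Phi(\vec x,S) = (1+\alpha)\,d(N,N)$ and $\Phi(\vec y,S) = d(N,N)+k$.

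Third, I would convert the unique-maximizer conditions into the stated inequalities. For part~1, $\vec x$ is the unique maximizer iff $\Phi(\vec x,S) > \Phi(a,S)$ for every $a\neq\vec x$, i.e.\ for every nonempty $T$; substituting the closed forms and using the edge decompositions $d(N,N) = d(N\setminus T,N\setminus T) + d(T,T) + d(T,N\setminus T)$ and $d(T,N)=d(T,T)+d(T,N\setminus T)$, the common internal-edge terms cancel and the inequality collapses to $\alpha\,d(T,N) > |T\cap S| - d(T,N\setminus T)$, which is~\eqref{eq:thm1} after dividing by $d(T,N)$. The argument for part~2 is symmetric: $\vec y$ is the unique maximizer iff $\Phi(\vec y,S) > \Phi(a,S)$ for every proper $T\subset N$, and the identical cancellation yields $\alpha\,d(N\setminus T,N\setminus T) < d(T,N\setminus T)+k-|T\cap S|$, which is~\eqref{eq:thm2}.

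The conceptual weight sits entirely in the first step: correctly identifying the potential (especially the adversary bonus term) and justifying that strict stochastic stability is equivalent to unique potential maximization; once that is in place the remaining steps are bookkeeping with $d(\cdot,\cdot)$. The one pitfall I would flag is the degenerate denominator, e.g.\ $d(T,N)=0$ when $T$ is a set of isolated vertices (and symmetrically $d(N\setminus T,N\setminus T)=0$ in part~2): there the displayed ratio is ill-defined, and one should note that an agent of $S$ with no incident edges strictly prefers $y$, so such cases are handled by restricting attention to the sets $T$ with positive denominator (equivalently, assuming each agent touched by $S$ has positive degree).
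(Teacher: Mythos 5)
Your proposal is correct and takes essentially the same route as the paper: your $\Phi(a,S)$ is exactly the paper's potential $\Phi^S$ (the paper writes it as $\tfrac{1}{2}\sum_i U_i + \sum_i \mathds{1}_{i\in S,\,a_i=y}$, which equals your edge-sum form), and both arguments invoke the Blume-style characterization of strict stochastic stability as unique potential maximization, evaluate $\Phi$ at $(\vec{y}_T,\vec{x}_{N\setminus T})$ in the same closed form, and reduce to \eqref{eq:thm1}--\eqref{eq:thm2} by the same edge-count cancellations. Your explicit verification of the exact-potential property and your caveat about degenerate denominators $d(T,N)=0$ or $d(N\setminus T,N\setminus T)=0$ are details the paper glosses over, but they do not change the approach.
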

\noindent The proof of Theorem~\ref{thm:genFI} appears in the Appendix.

In principle, Theorem~\ref{thm:genFI} provides for the computation of the resilience and susceptibility of any graph.
However, the computational complexity of doing so for an arbitrary graph is at least exponential in $n$, as the stated conditions have to be checked for all $T\subset N$ and then again potentially for all $S$.
Nonetheless, it is possible to compute bounds on resilience and susceptibility from these expressions in some special cases.

For example,~\eqref{eq:thm2} allows us to state a general upper bound on the susceptibility of any graph for any $k$, given in the following corollary:

\begin{corollary} \label{cor:fid-general}
Let $G=(V,E)$ be any graph, and let $k\leq n-1$:
\begin{equation}
\ay(G,\FID(k)) \leq \frac{k}{|E|}. \label{eq:fid-bound}
\end{equation}
\end{corollary}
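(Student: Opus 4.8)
The plan is to derive the bound directly from part~2 of Theorem~\ref{thm:genFI}, with essentially no new computation. Recall that $\ay(G,\FID(k))$ is the supremum of those $\alpha$ for which $(G,\alpha)$ is susceptible, i.e.\ for which there exists \emph{some} influence set $S\in\sss_k$ under which $\vec{y}$ is strictly stochastically stable. Theorem~\ref{thm:genFI} characterizes this latter event by a family of strict inequalities that must hold \emph{simultaneously} over all proper subsets $T\subset N$. Because the characterization is an if-and-only-if with a universal quantifier over $T$, any single admissible choice of $T$ immediately yields a \emph{necessary} condition on $\alpha$ that must hold no matter which $S$ the adversary selects. The entire argument therefore reduces to picking the one subset $T$ that produces the cleanest bound.

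The key step is to evaluate~\eqref{eq:thm2} at $T=\emptyset$. For this choice we have $N\setminus T = N$, so that $d(T,N\setminus T)=0$ (the empty set has no incident edges), $|T\cap S| = 0$ for every $S$, and $d(N\setminus T,N\setminus T)=d(N,N)=|E|$ (every edge has both endpoints in $N$). Substituting these values, the $T=\emptyset$ instance of the necessary condition becomes
\begin{equation}
\alpha < \frac{0 + k - 0}{|E|} = \frac{k}{|E|}.
\end{equation}
Crucially, this inequality carries no dependence on $S$, so it constrains every admissible adversary policy uniformly.

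It then follows that any $\alpha$ for which $(G,\alpha)$ is susceptible must satisfy $\alpha < k/|E|$, and taking the supremum over all such $\alpha$ gives $\ay(G,\FID(k))\leq k/|E|$, which is the claim. The only point requiring care, and the closest thing to an obstacle, is to confirm that $T=\emptyset$ genuinely lies in the range of the universal quantifier in part~2 of Theorem~\ref{thm:genFI} (the case excluded there is $T=N$, not $T=\emptyset$). Granting this, the proof is a one-line specialization of the theorem rather than a fresh derivation, and no further estimates are needed.
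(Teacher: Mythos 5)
Your proposal is correct and takes essentially the same route as the paper's own proof: both specialize the necessary condition~\eqref{eq:thm2} of Theorem~\ref{thm:genFI} to $T=\emptyset$, where $|T\cap S|=0$, $d(T,N\setminus T)=0$, and $d(N\setminus T,N\setminus T)=d(N,N)=|E|$, giving the $S$-independent constraint $\alpha<k/|E|$ and hence $\ay(G,\FID(k))\leq k/|E|$ upon taking the supremum over susceptible $\alpha$. Your explicit check that $T=\emptyset$ lies within the quantifier's range (only $T=N$ is excluded) is a correct precaution that the paper leaves implicit.
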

\begin{proof}
The result is obtained from~\eqref{eq:thm2} by setting $T=\emptyset$.
In this case, $N\setminus T=N$, and $d(N,N)=|E|$ by definition.
\end{proof}

An insight we can glean from Corollary~\ref{cor:fid-general} is that the susceptibility of a graph is related in some sense to its sparsity: graphs with many edges may tend to be less susceptible to fixed adversarial influence than graphs with few edges.
Note that~\eqref{eq:fid-bound} is not known to be tight for general graphs, but we show in the next theorem that it is tight for ring graphs for all~$k$.

\subsection{Exact susceptibilities for ring graphs}

We now analyze a fixed intelligent adversary's influence on a ring graph. 
Define 
\begin{align*}
%[t] := \{1,2,\ldots,t\}\subseteq N, \text{ and }
[i,j] := \{i,i+1,\ldots,j\}\subseteq N,
\end{align*}
and recall that in ring graphs, all addition and subtraction on node indices is assumed to be ${\rm mod}\ n$.

Theorem~\ref{thm:ringFI} gives the susceptibility of any ring graph influenced by a fixed intelligent adversary.

\begin{theorem}\label{thm:ringFI}
Let $G^{\rm r}\in\geer$ be a ring graph that is influenced by a fixed intelligent adversary with capability $k\leq n$.
Then
\begin{equation}
\ay\left(G^{\rm r},\FI(k)\right) = \frac{k}{n}. \label{eq:FIsus}
\end{equation}
This can be realized by an adversary distributing its influence set $S$ as evenly as possible around the ring, so that 
\begin{equation}
\left|S\cap [i,i+t]\right| \leq \left\lceil {kt\over n}\right\rceil \label{eq:FIpolicy}
\end{equation}
for any set of nodes $[i,i+t]\subseteq N$, with $i\in N$, $t\leq n$.

\end{theorem}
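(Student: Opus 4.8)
The plan is to combine the general upper bound already in hand with an explicit construction that meets it. The upper bound $\ay(\gr,\FI(k)) \le k/n$ is immediate: it is exactly Corollary~\ref{cor:fid-general} applied to a ring, where $|E| = n$; equivalently, taking $T = \emptyset$ in Theorem~\ref{thm:genFI}(2) forces $\alpha < k/n$ whenever $\vec y$ is to be strictly stochastically stable, and this holds for \emph{every} influence set $S$. It therefore remains to exhibit a single $S \in \sss_k$ for which $\vec y$ is strictly stochastically stable for all $\alpha < k/n$; by Theorem~\ref{thm:genFI}(2) this amounts to verifying one inequality per subset $T$.

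The first step is to rewrite the inequalities of Theorem~\ref{thm:genFI}(2) using the ring geometry. Fix $S$, write $U = N \setminus T$, set $m = |U|$, and let $c$ be the number of maximal arcs of consecutive nodes comprising $U$. For any proper nonempty $U \ne N$, each of the $c$ arcs contributes exactly two boundary edges while the edges internal to $U$ number $m - c$ in total, so $d(T,N\setminus T) = 2c$ and $d(N\setminus T,N\setminus T) = m - c$. Since $|S \cap T| + |S \cap U| = k$, the numerator collapses to $d(T,N\setminus T) + k - |T\cap S| = 2c + |S\cap U|$, and the constraint becomes $\alpha < (2c + |S\cap U|)/(m-c)$. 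This is vacuous when $m = c$ (all arcs singletons, zero denominator) and evaluates to exactly $k/n$ when $U = N$. Hence it suffices to show that, for a well-chosen $S$,
\[
\frac{2c + |S \cap U|}{m - c} \;\ge\; \frac{k}{n} \qquad \text{for every proper nonempty } U \text{ with } m > c.
\]

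The second step is to take $S$ to be an evenly spread set, i.e. one with $|S \cap W| \le \lceil k|W|/n\rceil$ for every arc $W$ of consecutive nodes; a set of size $k$ with this property exists by a standard equidistribution construction (e.g. $S = \{\lfloor jn/k\rfloor : 0 \le j < k\}$), and this is precisely the policy~\eqref{eq:FIpolicy}. The key observation is that the complement $T = N\setminus U$ is itself a union of exactly $c$ arcs totaling $n - m$ nodes. Applying the spread bound to each arc of $T$ and using $\lceil a/n\rceil \le (a + n - 1)/n$ for integers $a \ge 0$ gives $|S \cap T| \le (k(n-m) + c(n-1))/n$, so $|S \cap U| = k - |S\cap T| \ge (km - c(n-1))/n$. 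Substituting this lower bound, the target inequality $n(2c + |S\cap U|) \ge k(m-c)$ reduces after simplification to $c(n + k + 1) \ge 0$, which holds trivially. Thus the minimum over $T$ of the Theorem~\ref{thm:genFI}(2) bound equals $k/n$ and is attained at $T = \emptyset$, so $\vec y$ is strictly stochastically stable for all $\alpha < k/n$; this yields $\ay(\gr,\FI(k)) \ge k/n$ and hence equality.

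The main obstacle is the second step: getting the arc-counting identities exactly right (the clean collapse $k - |T\cap S| = |S\cap U|$ together with $d(T,N\setminus T) = 2c$ and $d(N\setminus T,N\setminus T) = m-c$) and then controlling the accumulated rounding error across the $c$ arcs of the complement. The ceiling estimate is what makes the spread property usable: each complement arc may carry up to one extra point beyond its proportional share, contributing the $c(n-1)/n$ slack, and the computation confirms this slack is always dominated by the boundary-edge term $2c$. A secondary point worth stating explicitly is the existence of the evenly spread set realizing~\eqref{eq:FIpolicy}, which is routine but should not be left implicit.
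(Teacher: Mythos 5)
Your proof is correct and takes essentially the same route as the paper's: the upper bound is Corollary~\ref{cor:fid-general} (i.e., $T=\emptyset$ in \eqref{eq:thm2}), and the lower bound verifies the remaining inequalities of Theorem~\ref{thm:genFI}(2) for an evenly-spread $S$ via arc counting with one unit of ceiling slack $\lceil z\rceil\le z+1$ per arc of the complement, which is exactly the paper's potential-function computation (components $T_1,\dots,T_p$ of the $y$-playing set) recast through Theorem~\ref{thm:genFI}. One incidental merit of your write-up: your spread condition ($|S\cap W|\le\lceil k|W|/n\rceil$ for an arc of $|W|$ nodes), together with the explicit equidistribution construction, is the satisfiable reading of \eqref{eq:FIpolicy} --- taken literally (e.g., with $t=0$, or with $n=4$, $k=2$) \eqref{eq:FIpolicy} admits no set $S$ at all --- so your version corrects an off-by-one in the stated policy rather than departing from the paper's argument.
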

\noindent The proof of Theorem~\ref{thm:ringFI} is in Appendix~\ref{a:fixed line graph proofs}.

\section{Uniformly random adversarial influence}

\subsection{General graphs}

It may be difficult to characterize the susceptibility of an arbitrary graph to random adversarial influence, but the following theorem gives an important piece of the puzzle.
Here, we show perhaps counterintuitively that that the susceptibility of every graph to a uniformly random adversary is independent of the adversary's capability $k$.
\begin{theorem}\label{thm:ur-gen}
Let $G$ be any graph.
For any $k\in\{2,\ldots,n-1\}$,
\begin{equation}
\ay\left(G,\UR(k)\right) = \ay\left(G,\UR(1)\right).
\end{equation}
\end{theorem}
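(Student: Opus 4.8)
The plan is to collapse the randomized dynamics into a single time-homogeneous Markov chain on $\aaa$ and then show that its large-$\beta$ resistance structure does not depend on $k$ anywhere in the range $\{1,\dots,n-1\}$. The first step is to observe that because $S(t)$ is drawn i.i.d. from $\sss_k$, independently of the revision clock and of the current state, the marginal process $a(t)$ is itself Markov: averaging the log-linear kernel over $S$ requires only the marginal probability that the revising agent $i$ lies in $S$, which by symmetry of the uniform draw equals $|S|/n = k/n$ for every $i$. Since a revising agent's probabilities depend on $S$ only through the indicator $\mathbb{1}[i\in S]$ (by~\eqref{e:new utility} the adversary adds $1$ to $U_i(y,a_{-i})$ exactly when $i\in S$), the effective kernel is the convex combination, writing $\rho:=k/n$,
\[
\Pr[a_i(t+1)=A\mid a_{-i}] = (1-\rho)\,P^{\rm off}_i(A) + \rho\,P^{\rm on}_i(A),
\]
where $P^{\rm off}_i$ and $P^{\rm on}_i$ are the ordinary log-linear probabilities in~\eqref{e:LLL dynamics} computed from $U_i$ and from the adversarially boosted utility, respectively. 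Thus $k$ enters the dynamics \emph{only} through the single scalar $\rho$, and $1\le k\le n-1$ is exactly the range with $\rho\in(0,1)$.

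The second step is to show this chain is a regular perturbed Markov process and to compute its resistances. Setting $\eps=e^{-\beta}$, $u_x=U_i(x,a_{-i})$, $u_y=U_i(y,a_{-i})$, the two components of a flip to $x$ carry resistances $(u_y-u_x)^+$ and $(u_y+1-u_x)^+$, while the two components of a flip to $y$ carry resistances $(u_x-u_y)^+$ and $(u_x-u_y-1)^+$. Since a convex combination with strictly positive weights has resistance equal to the minimum of the component resistances, the resistance of a flip to $x$ is $(u_y-u_x)^+$ and that of a flip to $y$ is $(u_x-u_y-1)^+$. The essential point is that these exponential rates, together with the set of zero-resistance (unperturbed) transitions, are completely independent of the value of $\rho$ as long as $\rho\in(0,1)$; only the prefactors, never the rates, see the magnitude of $\rho$. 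Intuitively, along the slowest escape path even the small per-step influence probability $k/n$ suffices to realize the resistance-lowering transition; changing $k$ rescales its frequency but not its exponential cost.

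The final step invokes the standard resistance-tree characterization of stochastically stable states (as in~\cite{Young1993}): the stochastically stable set is a function of the resistance function $r(a\to a')$ alone. Because $r$ coincides for all $k\in\{1,\dots,n-1\}$ (as a function of the state pair and of $\alpha$ through $u_x,u_y$), the minimum-resistance spanning-tree computation returns the same set for every such $k$. In particular $\vec{y}$ is stochastically stable for exactly the same values of $\alpha$, so $\ay(G,\UR(k))=\ay(G,\UR(1))$ for all $k\in\{2,\dots,n-1\}$, which is the claim.

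I expect the middle step to be the main obstacle: verifying that the averaged kernel is genuinely a regular perturbed process with the claimed resistances. One must confirm that the prefactor of the dominant term remains bounded away from $0$ and $\infty$ uniformly in $\beta$ — in particular handling the boundary configurations $u_x=u_y$ and $u_x=u_y+1$, where two component resistances tie at $0$ and the surviving prefactor is a positive multiple of $\rho$ or $1-\rho$ — and that the zero-resistance transition graph (hence the recurrent classes of the unperturbed chain, which are nontrivial in the regime $u_y<u_x<u_y+1$ where both flips have zero resistance) is itself $\rho$-independent, so the resistance-tree machinery applies uniformly in $k$.
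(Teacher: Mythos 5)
Your proposal is correct and takes essentially the same route as the paper: its Lemma~\ref{lem:mr} is exactly your middle step (averaging the log-linear kernel over the uniform draw of $S$ with weights $k/n$ and $(n-k)/n$, then extracting the $k$-independent resistances $\max\{U_i(x,a_{-i})-U_i(y,a_{-i})-1,0\}$ for flips to $y$ and $\max\{U_i(y,a_{-i})-U_i(x,a_{-i}),0\}$ for flips to $x$), after which the paper likewise concludes from the resistance-tree characterization that the stochastically stable set, and hence $\ay(G,\UR(k))$, is the same for all $k\leq n-1$. If anything, your explicit handling of the tie cases and of the $k$-independence of the zero-resistance graph and recurrent classes is more careful than the paper's one-line appeal to ``algebraic manipulations.''
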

The proof of Theorem~\ref{thm:ur-gen} appears in the Appendix.
This result means that each graph has a universal threshold such that if $\alpha$ falls below this, then even a single uniformly-random adversary will eventually influence all agents to play $\vec{y}$.
Note that larger $k$ likely allows the adversary to achieve $\vec{y}$ \emph{more quickly,} but that the threshold value itself for $\alpha$ is independent of~$k$.
We study this distinction numerically in Section~\ref{sec:sims}.

\subsection{Exact susceptibility for ring graphs}

When more graph structure is known, it may be possible to derive precise expressions for $\ay$.
In this section, we consider an adversary which influences a ring graph uniformly at random according to~\eqref{eq:ur}.

\begin{theorem}\label{thm:mr-ring}
Let $G^{\rm r}\in\geer$ be a ring graph that is influenced by a uniformly random adversary with capability $k\leq n$.
For all $k$, it holds that $\ax(G^{\rm r},\UR(k))=\ay(G^{\rm r},\UR(k))$.
Furthermore,
\begin{equation}
\ay\left(G^{\rm r},\UR(k)\right) = \left\{
\begin{array}{cl}
%0 			& \mbox{if } k=0, \vspace{2mm} \\ 
\frac{1}{2} 	& \mbox{if } k\in\{1,\ldots,n-1\},\vspace{2mm} \\
1 			& \mbox{if } k=n.
\end{array}\right.
\end{equation}

\end{theorem}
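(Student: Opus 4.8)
The plan is to reduce the randomized dynamics to a single time-homogeneous chain and then argue by stochastic potential (resistance trees). Since $S(t)$ is drawn i.i.d.\ from $\sss_k$ independently of which agent is selected to revise, the only feature of $S(t)$ that affects a revising agent $i$ is whether $i\in S(t)$, an event of probability $k/n$ by symmetry. Averaging the log-linear update~\eqref{e:LLL dynamics} over this event produces a time-homogeneous kernel in which agent $i$ adopts $x$ with probability $\tfrac{n-k}{n}p_i^{\rm out}+\tfrac{k}{n}p_i^{\rm in}$, where $p_i^{\rm out}$ and $p_i^{\rm in}$ are the logit probabilities computed from the uninfluenced and influenced utilities of~\eqref{e:new utility}. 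For $1\le k\le n-1$ both mixing weights are strictly positive constants, so this kernel is a regular perturbed Markov process in $\eps=e^{-\beta}$ whose single-flip resistance equals the \emph{minimum} of the uninfluenced and influenced resistances; when $k=n$ only the influenced term is present.

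Next I would evaluate these resistances on the ring as a function of the number $\nu\in\{0,1,2\}$ of a node's neighbors currently playing $x$. For $1\le k\le n-1$ this yields the clean picture that flipping a node to $x$ costs nothing whenever $\nu\ge1$, flipping to $y$ costs nothing whenever $\nu=0$, and the only expensive ``seed'' moves are inserting the first $x$ into $\vec{y}$ (resistance $2$) and the first $y$ into $\vec{x}$ (resistance $1+2\alpha$). Crucially these resistances do not depend on $k$, in agreement with Theorem~\ref{thm:ur-gen}. I would then show the only recurrent classes of the unperturbed process are $\{\vec{x}\}$ and $\{\vec{y}\}$: each is closed because escaping it requires a positive-resistance seed move, while from any configuration containing both actions the connectedness of the ring supplies a $y$-node adjacent to an $x$-node, which (for $k<n$) may be flipped to $x$ at zero resistance; iterating grows the $x$-block until $\vec{x}$ is reached, so every mixed configuration is transient. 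For $k=n$ the same conclusion follows by growing the $x$-block when $\alpha\ge1$ and the $y$-block when $\alpha\le1$.

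With only two recurrent classes, the resistance-tree characterization of stochastic stability reduces the problem to comparing the two barrier resistances $R(\vec{x}\!\to\!\vec{y})$ and $R(\vec{y}\!\to\!\vec{x})$, each defined as the minimum total resistance of a path between the states. Lower bounds are immediate from the forced first move out of a homogeneous profile, and they are matched by explicit ``seed-then-grow-for-free'' paths: for $k<n$ one gets $R(\vec{y}\!\to\!\vec{x})=2$ and $R(\vec{x}\!\to\!\vec{y})=1+2\alpha$ (with equality exactly when $\alpha\le1$, and the lower bound $1+2\alpha$ otherwise), so $\vec{y}$ is strictly stable iff $1+2\alpha<2$ and $\vec{x}$ iff $1+2\alpha>2$, giving the sharp threshold $\ay=\ax=\tfrac12$. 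For $k=n$ the seed costs become $3$ and $1+2\alpha$, so the comparison of $1+2\alpha$ against $3$ yields $\ay=\ax=1$.

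The step I expect to be the main obstacle is making the first two paragraphs rigorous: verifying that the mixture kernel is genuinely a regular perturbed process (so that the resistance of a sum of terms is the minimum of their resistances), and pinning down that the recurrent classes are exactly $\{\vec{x}\}$ and $\{\vec{y}\}$ despite the degenerate regimes (e.g.\ $\alpha$ near $1$) in which a node with one $x$-neighbor can move in either direction at zero resistance. Once the two-state reduction is justified, the barrier values follow cheaply from the forced-first-move lower bounds together with the free-growth upper bounds, and the equality $\ay=\ax$ is precisely the statement that these two barriers cross at a single value of $\alpha$.
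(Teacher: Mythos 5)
Your proposal is correct and follows essentially the same route as the paper: your averaging over the event $i\in S(t)$ is precisely the paper's Lemma~\ref{lem:mr} (the mixture kernel's single-flip resistance is the minimum of the influenced and uninfluenced resistances, hence independent of $k$ for $k\le n-1$), after which the paper likewise shows $\{\vec{x}\}$ and $\{\vec{y}\}$ are the only recurrent classes and compares the barrier resistances $1+2\alpha$ and $2$ via the resistance-tree theorem. If anything, your write-up is slightly more complete, since you also treat the $k=n$ case and the $\alpha\geq 1$ regime explicitly, which the paper's written proof (stated for $1\le k\le n-1$ and $\alpha\in(0,1)$) passes over.
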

\noindent Theorem~\ref{thm:mr-ring} is proved in Appendix~\ref{a:mobile random}.

Consider Theorems~\ref{thm:ringFI} and~\ref{thm:mr-ring} from the point of view of an adversary, and suppose that an adversary cannot choose $k$, but can choose whether to be fixed intelligent or uniformly random.
Theorems~\ref{thm:ringFI} and~\ref{thm:mr-ring} suggest that if the adversary's capability is low, it is better to employ a uniformly-random strategy than a fixed one; on the other hand, the conclusion is reversed if the capability is high.
Note that stochastic stability is a measure of behavior in the very long run, and as such it may not be a good indicator of short-term behavior.
In Section~\ref{sec:sims}, we present simulations which suggest that in the short-term, in some situations, uniformly random adversaries outperform fixed intelligent ones regardless of susceptibility.

\section{Mobile intelligent adversarial influence on ring graphs}

Finally we consider type $\MI(k)$ adversaries on ring graphs.
Recall that mobile intelligent adversaries choose influence set $S$ as a function of the current state; thus they are always at least as effective as fixed intelligent adversaries (since any fixed influence set can be implemented as a mobile adversary's policy).
However, it is not clear \emph{a priori} how mobile intelligent adversaries will compare to those of the uniformly random variety.
In this section, we show in Theorem~\ref{thm:MIring} that there exist policies which allow mobile intelligent adversaries to render $\vec{y}$ strictly stochastically stable \emph{much} more easily than the other types, even for relatively low values of $k$.

\begin{theorem}\label{thm:MIring}
Let $G^{\rm r}\in\geer$ be a ring graph that is influenced by a mobile intelligent adversary with capability $k\leq n$. Then
\begin{equation} \label{eq:suscepMI}
\ay\left(G^{\rm r},\MI(k)\right) = \left\{
\begin{array}{cl} 
%0 			& \mbox{if } k=0, 				\vspace{2mm} \\ 
\frac{k}{k+1} 	& \mbox{if } k\in\{1,2\}, 			\vspace{2mm} \\
\frac{n-1}{n}	& \mbox{if } k\in\{3,\ldots,n-1\},		\vspace{2mm} \\
1 			& \mbox{if } k=n.
\end{array}\right.
\end{equation}
\end{theorem}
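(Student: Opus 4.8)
The plan is to treat log-linear learning under a state-dependent mobile policy $S(\cdot)$ as a regular perturbed Markov process and apply the resistance-tree characterization of stochastic stability: $\vec{y}$ is strictly stochastically stable if and only if it is the unique state minimizing stochastic potential, where the resistance of a single-agent revision $a\to(a_i',a_{-i})$ equals $\max_{b\in\{x,y\}}\tilde U_i((b,a_{-i}),S(a))-\tilde U_i((a_i',a_{-i}),S(a))$. Because $S$ may depend on the state, the perturbed process is neither reversible nor a potential game, so I would work directly with resistances rather than with a Gibbs distribution. First I would tabulate the elementary resistances on the ring: seeding a $y$ inside an all-$x$ neighborhood costs $1+2\alpha$ (minimized when the adversary influences that agent); extending a contiguous $y$-block across a boundary costs $0$ when the adversary influences the boundary $x$-agent (using $\alpha<1$) and $\alpha$ otherwise; eroding a $y$-block at a boundary costs $1-\alpha$ when the adversary influences the boundary $y$-agent and $0$ otherwise; and the first defection out of $\vec{y}$ costs $2$, or $3$ if that agent is influenced.

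Next I would reduce the comparison to single-contiguous-block states. Writing $\sigma_m$ for the configuration in which a length-$m$ arc plays $y$ and the rest plays $x$, I would argue that the stochastically stable set is contained in $\{\vec{x},\vec{y}\}$ and that optimal transitions pass monotonically through the chain $\sigma_0=\vec{x},\sigma_1,\dots,\sigma_n=\vec{y}$, so that $\vec{y}$ is strictly stochastically stable iff $R_{xy}<R_{yx}$, where $R_{xy}$ and $R_{yx}$ are the minimal resistances of single-block paths in the two directions under a common policy. The essential coupling is that the adversary commits to one map $S(\cdot)$, so its choice at each interior state $\sigma_m$ simultaneously fixes the cost of the forward (grow-$y$) and backward (erode-$y$) steps leaving that state. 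At such a state exactly three agents are useful: one boundary $x$-agent, whose influence keeps the forward step free, and the two boundary $y$-agents, whose influence makes the backward step cost $1-\alpha$.

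The formula then follows from a budget count. When $k\ge 3$ the adversary affords all three useful agents at every interior state, yielding $R_{xy}=1+2\alpha$ and $R_{yx}=2+(n-2)(1-\alpha)$ for $k\le n-1$, and $R_{yx}=3+(n-2)(1-\alpha)$ when $k=n$ since the first defection can also be blocked; solving $R_{xy}<R_{yx}$ gives the thresholds $\tfrac{n-1}{n}$ and $1$. When $k\le 2$ the adversary cannot cover all three, and I would parametrize its policy by the number $j$ of interior states at which it defends the backward step, obtaining $R_{xy}=1+2\alpha+j\alpha$ and $R_{yx}=2+(j+1)(1-\alpha)$; the resulting threshold $\tfrac{j+2}{2j+3}$ is decreasing in $j$ and hence maximized at $j=0$, where forward growth is kept entirely free and the backward direction is defended only at the single state adjacent to $\vec{y}$. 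This yields $\tfrac{k}{k+1}$ for $k\in\{1,2\}$; for $k=1$ even one backward step cannot be fully defended, collapsing $R_{yx}$ to $2$ and the threshold to $\tfrac12$.

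I expect the main obstacle to be the converse: showing that no policy, however clever, beats the single-block value. The achievability direction only requires exhibiting the policies above and building an in-tree to $\vec{y}$ of smaller resistance than to any competitor, but the converse must rule out policies that maintain several $y$-blocks at once or erode along multiple fronts. The natural tool is a cut/counting argument: since the adversary influences at most $k$ agents per state, multiple $y$-blocks create more boundaries than it can simultaneously cover, so an eroding path can always route through an uncovered boundary at cost $0$; hence additional blocks only lower $R_{yx}$, while the invariant lower bound $R_{xy}\ge 1+2\alpha$ is unaffected, so single-block policies are optimal for the adversary. Making this accounting airtight — simultaneously bounding $R_{yx}$ from above and $R_{xy}$ from below under a single policy, and confirming that no intermediate multi-block configuration is itself stochastically stable — is the delicate part of the argument.
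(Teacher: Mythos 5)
Your route is the same as the paper's: treat log-linear learning under a state-dependent policy as a regular perturbed process, tabulate the elementary resistances (your values agree with Lemma~\ref{lem:mi} and Table~\ref{tab:resist}), construct attack/defend policies (these are exactly the balanced policies of Definition~\ref{def:optpolicy}), and compute $R(\vec{x}\to\vec{y})=1+2\alpha$ against $R(\vec{y}\to\vec{x})=2$, $3-\alpha$, and $2+(n-2)(1-\alpha)$ for $k=1$, $k=2$, and $k\geq 3$, which reproduces the paper's achievability direction and all the right constants. The genuine gap is in the converse, and it sits exactly where you flag it --- but it is worse than a bookkeeping issue. Your reduction ``$\vec{y}$ is strictly stochastically stable iff $R_{xy}<R_{yx}$'' is valid only when $\vec{x}$ and $\vec{y}$ are the \emph{only} recurrent classes of the unperturbed process, and this premise is violated by the very policies your parametrization introduces: at any state where the adversary defends both boundary $y$-agents of the block (your $j\geq 1$ defended interior states), every exiting transition has resistance at least $\min\{\alpha,1-\alpha\}>0$, so that state is itself a singleton recurrent class. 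Once such classes exist, stochastic stability must be decided by comparing minimum-resistance trees over \emph{all} recurrent classes (Theorem~\ref{t:resistance trees theorem}), not two path resistances. Your formulas $R_{xy}=1+2\alpha+j\alpha$ and $R_{yx}=2+(j+1)(1-\alpha)$ do happen to coincide with legitimate tree resistances, but nothing in your argument establishes this, nor rules out that one of the intermediate defended configurations is itself stochastically stable.

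The paper closes this gap with two ingredients your proposal lacks. First, Lemma~\ref{lem:k2}: for $k=2$ and an \emph{arbitrary} policy, every recurrent class is a singleton, and the only ones besides $\vec{x}$ and $\vec{y}$ are states with a single contiguous chain of $x$-players whose two $y$-neighbors are both influenced. This is what legitimizes restricting attention to single-block, boundary-influencing policies; your multi-block ``cut/counting'' sketch is the right intuition, but this lemma is what turns it into a proof. Second, the stochastic-potential count of~\eqref{eq:gx}--\eqref{eq:gy}: with $m$ mixed recurrent classes, $\gamma_{\vec{x}}\leq 2+(m+1)(1-\alpha)$ while $\gamma_{\vec{y}}\geq 1+2\alpha+m(1-\alpha)$, since any tree rooted at $\vec{y}$ must pay at least $1+2\alpha$ to leave $\vec{x}$ and at least $1-\alpha$ to leave each mixed class; the $m$-dependent terms cancel and the threshold $2/3$ survives independently of the policy. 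Note also that your converse for $k\geq 3$ has the same exposure (a $k\geq 3$ policy may defend without attacking, again creating mixed recurrent classes); the paper handles that case not by your direct path-resistance comparison but by monotonicity of susceptibility in $k$ together with the $(n-1)/n$ benchmark. In short: correct framework, correct policies, correct constants, achievability essentially complete --- but the converse, which you yourself identify as the delicate part, requires the recurrent-class characterization and tree-based potential comparison that your proposal does not supply.
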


The proof of Theorem~\ref{thm:MIring} is included in Appendix~\ref{a:intelligent proof}.
Recall that a uniformly random adversary with $k\geq 1$ can stabilize $\vec{y}$ any time $\alpha <1/2;$ an adversary who can intelligently influence a different single agent in $N$ each day can stabilize $\vec{y}$ under these same conditions.
If the mobile intelligent adversary has capability $k\geq 3$, it can stabilize $ \vec{y}$ when $\alpha < (n-1)/n$, i.e., under the same conditions as a fixed intelligent adversary with capability $k=n-1$.

\section{Simulations} \label{sec:sims}

\subsection{Numerically verifying stochastic stability}

Consider fixed intelligent and uniformly random adversaries with~$k=1$ attempting to influence a~$3$-node ring graph, which we denote $G^3$.
Theorems~\ref{thm:ringFI} and~\ref{thm:mr-ring} give the associated susceptibilities as $\ay(G^3,\FI(1))=1/3$ and $\ay(G^3,\UR(1))=1/2$. Thus, for any $\alpha\in(1/3,1/2)$, uniformly random adversaries should be able to render~$\vec{y}$ stochastically stable, while fixed intelligent should not.

To verify this, we numerically compute the stationary distributions
associated with Markov chains for these two adversary types on~$G^3$ with the agents updating actions using log-linear learning.
%\footnote{
%For uniformly random adversaries, we explicitly construct the Markov matrix from probabilities from~\eqref{e:LLL dynamics} (see also the proof of Lemma~\ref{lem:mr} in the Appendix), and raise this matrix to a sufficiently-large power to find the stationary distribution.
%Alternatively, fixed intelligent adversaries induce an exact potential game with potential function
%} %
For each value of noise parameter $\beta$, we record the expected fraction of the agents that are playing $y$ in the respective stationary distribution, and plot these values in Figure~\ref{fig:fracY}.
As expected, when $\alpha<1/3$, both adversary types induce a large fraction of agents to play $y$ when $\beta$ is large, but when $\alpha\in(1/3,1/2)$, only the uniformly-random adversary is able to do so.

\begin{figure}
% this figure generated by https://github.ucsb.edu/philipbrown/adversary-python/blob/sub1/ringPlotScript.py
% with call plotURFIdata(bmax=14,alphas=[1/4,3/8],lss=['-','--'],labels=['UR','FI'],idx=nY/3)
  \centering
    \includegraphics[scale=.25]{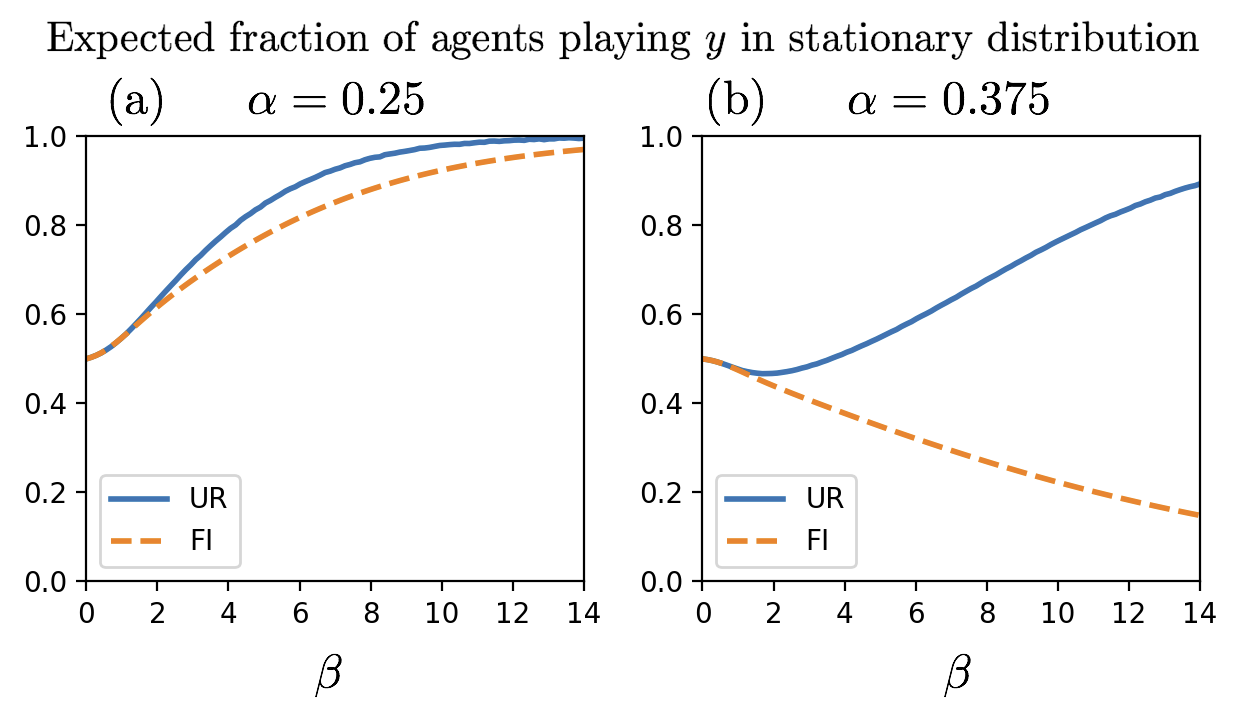}
  \caption{The expected fraction of agents playing $y$ with respect to noise parameter $\beta$, in the stationary distributions of the Markov chains associated with log-linear learning on a $3$-agent ring graph for $\FI(k)$ and $\UR(k)$ with $k=1$.
  In this setting, Theorems~\ref{thm:ringFI} and~\ref{thm:mr-ring} give the susceptibilities of this graph to fixed intelligent and uniformly random adversaries as $1/3$ and $1/2$, respectively.
  On the left, (a) shows that when $\alpha=0.25$ (that is, the graph is susceptible to both types of adversaries), the expected fraction of agents playing $y$ increases with $\beta$, apparently converging to $1$ as $\beta\to\infty$.
  On the right, (b) shows that when $\alpha=0.375$ (that is, the graph is susceptible only to uniformly random adversaries), the two curves diverge from one another; as $\beta$ increases, uniformly-random adversaries convert an increasing fraction of agents to play $y$, while fixed intelligent adversaries convert a decreasing fraction.
   }
  \label{fig:fracY}
\end{figure}

\subsection{Simulations of short-term behavior}

The foregoing work has viewed the problem solely through the lens of stochastic stability, which is primarily concerned with long-run behavior and may not give much indication about the behavior of a system on short timescales.
Accordingly, we perform an empirical hitting-time analysis for all three types of adversary over a wide range of $k$ on a ring graph with $n=10$ and the log-linear learning rule with parameter $\beta=2$.
We compute the average number of iterations it takes each associated Markov process to first reach $\vec{y}$, having been initialized at $\vec{x}$.
For each adversary type, we compute this hitting time as a function of capability $k\in\{1,\ldots,9\}$ and payoff gain $\alpha\in\{0.2,0.3,0.4,0.5,0.6\}$.
See Figure~\ref{fig:allhit} for plots of approximate hitting times, which are computed by averaging over many sample paths for each point.

Figure~\ref{fig:allhit} illustrates that mobile intelligent adversaries have a great advantage over the other types (particularly for low $k$).
The distinction between fixed intelligent and uniformly random types is less clear -- although for low $k$ and large $\alpha$, fixed intelligent adversaries do appear slower than uniformly random.

The plots in Figure~\ref{fig:allhit} suggest that short-term behavior of these Markov chains is not predicted well by stochastic stability.
Nonetheless, several of the general distinctions between adversary types that are visible in the resilience plots of Figure~\ref{fig:comparison} do indeed appear in the hitting times data.
Furthermore, for each $\alpha$, hitting times in Figure~\ref{fig:allhit} appear to converge to nearly the same value for the various adversary types as $k$ approaches $n$.
This should be expected of any performance metric, since for large $k$, the qualitative distinction between the various adversary policies vanishes.

\section{Summary and future work}

This paper represents an initial study into the effects of adversarial influence on graphical coordination games.
A wide array of extensions are possible here.
Naturally, the setting of ring graphs is rather limited, and it is likely that there are many other simple graph structures that will prove analytically tractable (e.g., regular graphs or trees).
A study on the specific impact of graph structure on susceptibility would be interesting; for example, how does the degree distribution of a graph affect its susceptibility?

Looking further ahead, it will be interesting to consider other adversary objectives.
For example, what if rather than stabilizing $\vec{y}$, the adversary simply wanted to minimize the total utility experienced by the agents?
How would changing the objective in this way alter the results?

\bibliographystyle{ieeetr}
\bibliography{../library/library}

\begin{thebibliography}{10}

\bibitem{Borowski2015}
H.~P. Borowski and J.~R. Marden, ``{Understanding the Influence of Adversaries
  in Distributed Systems},'' in {\em IEEE Conference on Decision and Control
  (CDC)}, pp.~2301--2306, 2015.

\bibitem{Kreindler2014}
G.~E. Kreindler and H.~P. Young, ``{Rapid innovation diffusion in social
  networks.},'' {\em Proceedings of the National Academy of Sciences of the
  United States of America}, vol.~111, pp.~10881--10888, 2014.

\bibitem{Hendrickx2015}
J.~M. Hendrickx, K.~H. Johansson, R.~M. Jungers, H.~Sandberg, and K.~C. Sou,
  ``{Efficient computations of a security index for false data attacks in power
  networks},'' {\em IEEE Transactions on Automatic Control}, vol.~59, no.~12,
  pp.~3194--3208, 2015.

\bibitem{Fawzi2014}
H.~Fawzi, P.~Tabuada, and S.~Diggavi, ``{Secure estimation and control for
  cyber-physical systems under adversarial attacks},'' {\em IEEE Transactions
  on Automatic Control}, vol.~59, no.~6, pp.~1454--1467, 2014.

\bibitem{Bai2014}
C.~Z. Bai and V.~Gupta, ``{On Kalman filtering in the presence of a compromised
  sensor: Fundamental performance bounds},'' in {\em Proceedings of the
  American Control Conference}, pp.~3029--3034, 2014.

\bibitem{Pasqualetti2012}
F.~Pasqualetti, F.~Dorfler, and F.~Bullo, ``{Cyber-physical security via
  geometric control: Distributed monitoring and malicious attacks},'' in {\em
  Proceedings of the IEEE Conference on Decision and Control}, pp.~3418--3425,
  2012.

\bibitem{Hota2017}
A.~R. Hota and S.~Sundaram, ``{Game-Theoretic Protection Against Networked SIS
  Epidemics by Human Decision-Makers},'' {\em arXiv:1703.08750}, 2017.

\bibitem{Cooper1999}
R.~W. Cooper, {\em {Coordination Games}}.
\newblock Cambridge, UK: Cambridge University Press, 1999.

\bibitem{Ullmann-Margalit1977}
E.~Ullmann-Margalit, {\em {The Emergence of Norms}}.
\newblock Oxford University Press, 1977.

\bibitem{Kearns2001}
M.~Kearns, M.~L. Littman, and S.~Singh, ``{Graphical Models for Game Theory},''
  {\em Proceedings of the 17th conference on Uncertainty in artificial
  intelligence}, no.~Owen, pp.~253--260, 2001.

\bibitem{Young2011}
H.~P. Young, {\em {Strategic Learning and its Limits}}, vol.~9780199269.
\newblock 2011.

\bibitem{Montanari2010}
A.~Montanari and A.~Saberi, ``{The spread of innovations in social
  networks.},'' {\em Proceedings of the National Academy of Sciences of the
  United States of America}, vol.~107, no.~47, pp.~20196--20201, 2010.

\bibitem{Blume1993}
L.~E. Blume, ``{The Statistical Mechanics of Strategic Interaction},'' {\em
  Games and Economic Behavior}, vol.~5, no.~3, pp.~387--424, 1993.

\bibitem{Shah2010}
D.~Shah and J.~Shin, ``{Dynamics in Congestion Games},'' {\em Proc. ACM
  SIGMETRICS'10}, vol.~38, p.~107, 2010.

\bibitem{Marden2013}
J.~R. Marden and A.~Wierman, ``{Distributed Welfare Games},'' {\em Operations
  Research}, vol.~61, pp.~155--168, 2013.

\bibitem{Zhu2009}
M.~Zhu and S.~Mart{\'{i}}nez, ``{Distributed coverage games for mobile visual
  sensors (I): Reaching the set of Nash equilibria},'' in {\em Proceedings of
  the IEEE Conference on Decision and Control}, pp.~169--174, 2009.

\bibitem{Goto2010}
T.~Goto, T.~Hatanaka, and M.~Fujita, ``{Potential game theoretic attitude
  coordination on the circle: Synchronization and balanced circular
  formation},'' in {\em IEEE International Symposium on Intelligent Control -
  Proceedings}, pp.~2314--2319, 2010.

\bibitem{Staudigl2012}
M.~Staudigl, ``{Stochastic stability in asymmetric binary choice coordination
  games},'' {\em Games and Economic Behavior}, vol.~75, no.~1, pp.~372--401,
  2012.

\bibitem{Fox2010}
M.~J. Fox and J.~S. Shamma, ``{Communication, convergence, and stochastic
  stability in self-assembly},'' in {\em Proceedings of the IEEE Conference on
  Decision and Control}, pp.~7245--7250, 2010.

\bibitem{Young1993}
H.~P. Young, ``{The Evolution of Conventions},'' {\em Econometrica}, vol.~61,
  no.~1, pp.~57--84, 1993.

\bibitem{Monderer1996}
D.~Monderer and L.~S. Shapley, ``{Potential Games},'' {\em Games and Economic
  Behavior}, vol.~14, no.~1, pp.~124--143, 1996.

\bibitem{Marden2007}
J.~R. Marden and J.~S. Shamma, ``{Regret Based Dynamics: Convergence in Weakly
  Acyclic Games},'' {\em International Joint Conference on Autonomous Agents
  and Multiagent Systems}, vol.~5, pp.~194--201, 2007.

\bibitem{Wolpert2001}
D.~H. Wolpert and K.~Tumer, ``{Optimal Payoff Functions for Members of
  Collectives},'' {\em Advances in Complex Systems}, vol.~04, no.~02n03,
  pp.~265--279, 2001.

\bibitem{Foster1990}
D.~Foster and P.~Young, ``{Stochastic evolutionary game dynamics},'' {\em
  Theoretical Population Biology}, vol.~38, no.~2, pp.~219--232, 1990.

\end{thebibliography}

\appendix

\subsection{Log-linear learning and its underlying Markov process}\label{a:LLL Markov}

For each model of adversarial behavior, log-linear learning dynamics define a Markov chain, $P_\beta$ over state space $\mathcal{A}$ with transition probabilities parameterized by $\beta>0$~\cite{Blume1993}.
These transition probabilities can readily be computed according to~\eqref{e:LLL dynamics}, taking into account the specifics of the adversarial model in question.
%Note that If $a$ and $a^\prime\in\mathcal{A}$ differ by more than one agent's action, then $P_\beta(a\to a^\prime) = 0$. 
Since $P_\beta$  is aperiodic and irreducible for any $\beta >0$, it has a unique stationary distribution, $\pi_\beta$, with $\pi_\beta P_\beta = \pi_\beta$. 

As $\beta\to\infty$, this converges to a unique limiting distribution
%\begin{equation}
$\pi := \lim_{\beta\to\infty} \pi_\beta.$  
%\end{equation}
If %$\pi(a)>0,$ then $a$ is \emph{stochastically stable}, and if 
$\pi(a) = 1,$ then joint action $a$ is \emph{strictly stochastically stable} \cite{Foster1990}.%
\footnote{Note that this definition of strict stochastic stability is equivalent to the definition in~\eqref{eq:weirdSSS}.}

\begin{figure*}
% data generated with methods in https://github.ucsb.edu/philipbrown/adversary-python/blob/sub1/LLLRing.py 50 to 100 sample paths
% this figure generated by https://github.ucsb.edu/philipbrown/adversary-python/blob/sub1/ringPlotScript.py
% makeGoodPlots(['interesting/FIBlendeach1-9.json','interesting/URBlendeach1-9.json','interesting/MI100each1-9.json'])
  \centering
    \includegraphics[scale=.27]{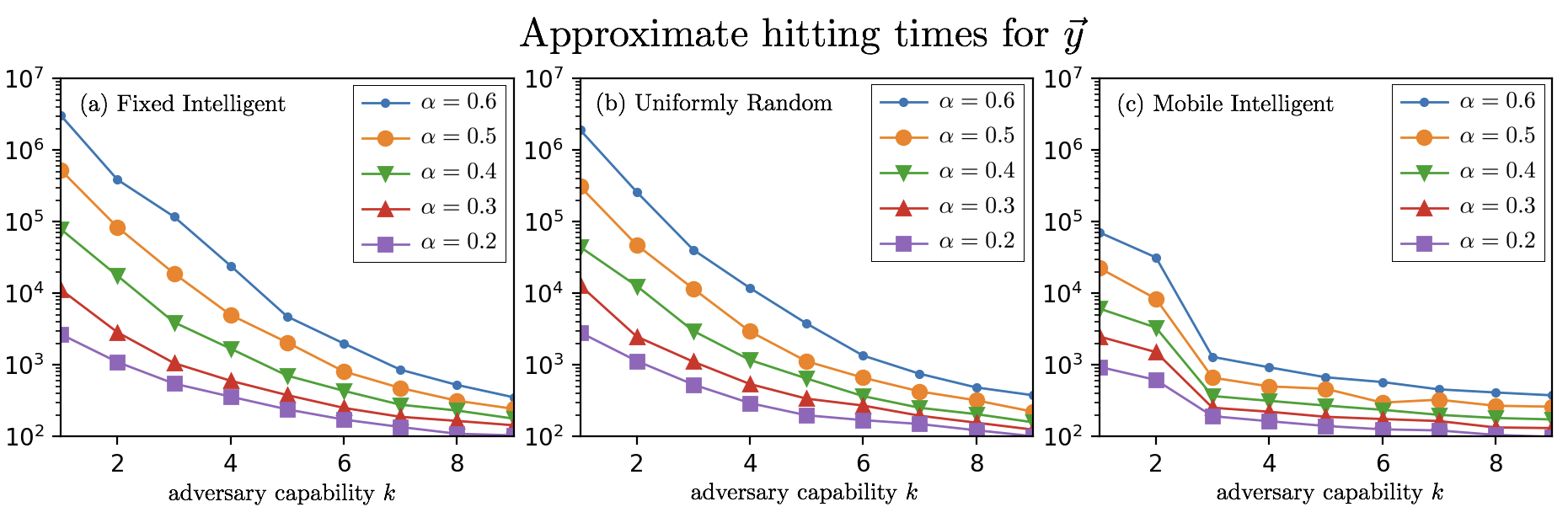}
  \caption{Average number of iterations for the Markov chain associated with log-linear learning ($\beta=2$) on a ring graph ($n=10$) with each type of adversary to reach~$\vec{y}$, having been initialized at~$\vec{x}$, for various values of $k$ and $\alpha$.
  Recall that larger values of $\alpha$ indicate that $x$ has a larger intrinsic advantage over $y$, which makes it more difficult for any adversary to stabilize $\vec{y}$.
  These plots illustrate this clearly: for example, with $k=1$, the uniformly-random adversary takes nearly $2$ orders of magnitude longer to first reach $\vec{y}$ with $\alpha=0.6$ as compared with $\alpha=0.4$.
  Note that the distinction between fixed intelligent and uniformly random adversaries is difficult to see, but that the mobile intelligent adversary outperforms the others for low $k$ and large $\alpha$ by at least an order of magnitude.
  Also, for $k=1$ and $\alpha\geq0.4$, these plots do clearly indicate that hitting times for fixed intelligent adversaries are at least twice as large as for the uniformly-random variety.
 For the mobile intelligent adversaries, as might be expected from Figure~\ref{fig:comparison}, there is a dramatic improvement in the adversary's performance going from $k=2$ to $k=3$, while the gains for larger $k$ are more modest.
  }
  \label{fig:allhit}
\end{figure*}

As $\beta\to \infty$, transition probabilities $P_\beta(a\to a^\prime)$ of  log-linear learning converge to the transition probabilities, $P(a\to a^\prime)$, of a best response process.  
Distribution $\pi$ is one of possibly multiple stationary distributions of a best response process over game $G$. %Thus, log-linear learning selects a stationary distribution of the best response process. 

\subsection{Stability in the presence of a fixed intelligent adversary}\label{a:fixed line graph proofs}

When a fixed intelligent adversary influences set $S$, the corresponding influenced graphical coordination game is a potential game \cite{Monderer1996} with potential function 
\begin{equation}
\Phi^S(a_i,a_{-i}) = {1\over 2}\sum_{i\in N}\left(U_i(a_i,a_{-i}) + 2\cdot\mathds{1}_{i\in S, a_i = y}\right).
\end{equation}
It is well known that $a\in\mathcal{A}$ is strictly stochastically stable if and only if $\Phi^S(a) >\Phi^S(a^\prime)$ for all $a^\prime\in\mathcal{A},$ $a^\prime\neq a$~\cite{Blume1993}. %This fact may be used to prove Theorems~\ref{t:A stable all} - \ref{thm:ringFI}.

%The proof of Theorem~\ref{thm:ringFI} follows using \eqref{e:stationary distn}.

\vspace{2mm}
\noindent\emph{Proof of Theorem~\ref{thm:genFI}:}
This proof uses techniques inspired by \cite[Proposition 2]{Young2011}. 
Let ${G} = (N,E)$ and suppose the fixed intelligent adversary is influencing set $S\subset N$.
Let
$(\vec{y}_T,\vec{x}_{N\setminus T})$ denote the joint action $a$ with $T = \{i\st a_i = y\}.$

If $\alpha$ satisfies~\eqref{eq:thm1} for all $T\subseteq N$ with $T\neq\emptyset$, we have
\begin{align}
\Phi^S(\vec{x})  &= (1+\alpha)d(N,N) \label{eq:phix} \\
&> (1+\alpha)d(N\setminus T,N\setminus T) + d(T,T) + |T\cap S| &\nonumber\\
&= \Phi^S(\vec{y}_T,\vec{x}_{N\setminus T}),\nonumber 
\end{align}
where the inequality follows from~\eqref{eq:thm1}, the definition of $d(S,T)$, and basic set theory.
This implies that joint action $\vec{x}$ is strictly stochastically stable.
The opposite implication follows easily, and~\eqref{eq:thm1} is proved.

The proof of~\eqref{eq:thm2} follows an identical argument, replacing $\vec{x}$ in~\eqref{eq:phix} with $\vec{y}$ and noting that $\Phi^S(\vec{y})=d(N,N)+k$.
\hfill\qed

\vspace{2mm}
\noindent\emph{Proof of Theorem~\ref{thm:ringFI}:}
Suppose $\alpha < {k\mathop{/}n}.$
Then 
$$\Phi^S(\vec{y}) = n+k > n + \alpha n = \Phi^S(\vec{x})$$
for any $S\subseteq N$ with $|S| = k.$
Then, to show that $\vec{y}$ is stochastically stable for influenced set $S$ satisfying~\eqref{eq:FIpolicy},
it remains to show that 
$\Phi^S(\vec{y}) > \Phi^S(\vec{y}_T,\vec{x}_{N\setminus T})$
for any $T\subset N$ with $T\neq \emptyset$ and $T\neq N.$
Suppose the graph restricted to set $T$ has $p$ components, where $p\geq 1.$
Label these components as $T_1,T_2,\ldots,T_p$ and define $t:=|T|$ and $t_i:=|T_i|$. %Let $\ell$ represent the number of components in the graph restricted to $N\setminus T.$ %Since this is a ring graph, we have $\ell\in \{p-1,p,p+1\}$, and since $T\neq N$, $\ell \geq 1.$ 
For any $T\subset N$ with $T\neq N,T\neq\emptyset,$ and $0<t<n$,
\begin{align}
\Phi^S(\vec{y}_{T},\vec{x}_{N\setminus T}) &=(1+\alpha)(n-t-p) + t-p + \sum_{j=1}^p |S\cap T_j|  \nonumber\\ 
&< n+k \nonumber\\ %\label{e:some label}\\ %\displaybreak[3]\\
&=\Phi^S(\vec{y}),\nonumber
\end{align}
where the inequality follows from~$\alpha<k/n$ and~\eqref{eq:FIpolicy} combined with the fact that $\lceil z \rceil \leq z+1$ for any $z\in \mathbb{R}$.
That is, $\ay(G,\FI(k))\geq k/n$.
The matching upper bound is given by Corollary~\ref{cor:fid-general}, since in a ring graph $|E|=n$.
\hfill\qed

\subsection{Resistance trees for stochastic stability analysis}\label{a:resistance trees}

When graphical coordination game $G$ is influenced by a uniformly-random adversary, it is no longer a potential game; resistance tree tools defined in this section enable us to determine stochastically stable states for uniformly-random and mobile intelligent adversaries.

The Markov process $P_\beta$ defined by log-linear learning dynamics over a normal form game is a \emph{regular perturbation} of a best response process.
In particular, log-linear learning is a regular perturbation of the best response process defined in Appendix~\ref{a:LLL Markov}, where the size of the perturbation is parameterized by $\eps = e^{-\beta}.$
The following definitions and analysis techniques are taken from \cite{Young1993}. 

\begin{defn}[Regular Perturbed Process~\cite{Young1993}]
A Markov process with transition matrix $M_\eps$ defined over state space $\Omega$ and parameterized by perturbation $\eps\in (0,a]$ for some $a>0$ is a \emph{regular perturbation} of the process $M_0$ if it satisfies:
\begin{enumerate}[leftmargin=1.5em]
\item $M_\eps$ is aperiodic and irreducible for all $\eps\in (0,a]$.
\item $\lim_{\eps\to 0^+}M_\eps (\xi,\xi') \to M(\xi,\xi')$ for all $\xi,\xi'\in\Omega.$
\item If $M_\eps(\xi,\xi') >0$ for some $\eps \in (0,a]$ then there exists $r(\xi,\xi')$ such that 
\begin{equation}\label{e:resistance}
0<\lim_{\eps\to 0^+} {M_\eps(\xi,\xi')\over \eps^{r(\xi,\xi')}} <\infty,
\end{equation}
where $r(\xi,\xi')$ is referred to as the \emph{resistance} of transition $\xi\to \xi'$.
\end{enumerate}
\end{defn}

Let Markov process $M_\eps$ be a regular perturbation of process $M_0$ over state space $\Omega$, where perturbations are parameterized by $\eps\in (0,a]$ for some $a>0.$
Define graph $\Gamma = (\Omega,\mathcal{E})$ to be the directed graph with $(\xi,\xi')\in \mathcal{E} \iff M_\eps(\xi,\xi')>0$ for some $\eps\in(0,a]$.
Edge $(\xi,\xi')\in \mathcal{E}$ is weighted by the resistance $r(\xi,\xi')$ defined in \eqref{e:resistance}.
The resistance of path $p = (e_1,e_2,\ldots,e_k)$ is the sum of the resistances of the associated state transitions:
\begin{equation}
r(p):= \sum_{\ell = 1}^k r(e_\ell).
\end{equation}

Now let $\Omega_1,\Omega_2,\ldots,\Omega_m$ denote the $m\geq 1$ recurrent classes of process $M_0.$ In graph $G$, these classes satisfy:
\begin{enumerate}[leftmargin=1.5em]
\item For all $\xi\in \Omega$, there is a zero resistance path in $\Gamma$ from $\xi$ to $\Omega_i$ for some $i\in \{1,2,\ldots,m\}.$
\item For all $i\in \{1,2,\ldots,m\}$ and all $\xi,\xi'\in \Omega_i$ there exists a zero resistance path in $\Gamma$ from $\xi$ to $\xi'$ and from $\xi'$ to $\xi$.
\item For all $\xi,\xi'$ with $\xi\in \Omega_i$ for some $i\in \{1,2,\ldots,m\},$ and $\xi'\notin \Omega_i$, $r(\xi,\xi') >0.$ 
\end{enumerate}

Define a second directed graph $\mathcal{G} = (\{\Omega_i\},\mathscr{E})$ over the $m$ recurrent classes in $\Omega.$
This is a complete graph; i.e., $(i,j)\in \mathscr{E}$ for all $i,j\in \{1,2,\ldots,m\},$ $i\neq j.$
Edge $(i,j)$ is weighted by $R(i,j)$, the total resistance of the lowest resistance path in $\Gamma$ starting in $\Omega_i$ and ending in $\Omega_j$:
\begin{equation}
R(i,j):=\min_{i\in \Omega_i,j\in\Omega_j}\min_{p\in \mathcal{P}(i\to j)} r(p),
\end{equation}
where $\mathcal{P}(i\to j)$ denotes the set of all simple paths in $\Gamma$ beginning at $i$ and ending at $j$. 

Let $\mathcal{T}_i$ be the set of all spanning trees of $\mathcal{G}$ rooted at $i$.
Denote the \emph{resistance} of tree $T\in\mathcal{T}_i$ by
%\begin{equation}
$R(T) := \sum_{e\in T} R(e),$
%\end{equation}
and define 
\begin{equation}
\gamma_i := \min_{T\in \mathcal{T}_i} R(T)
\end{equation}
to be the \emph{stochastic potential} of $\Omega_i$.
We use the following theorem due to \cite{Young1993} in our analysis:
\begin{theorem}[From \cite{Young1993}]\label{t:resistance trees theorem}
State $\xi\in \Omega$ is stochastically stable if and only if $\xi\in \Omega_i$ where 
\begin{equation}
\gamma_i = \min_{j\in \{1,2,\ldots,m\}}\gamma_j,
\end{equation}
i.e., $x$ belongs to a recurrent class with minimal stochastic potential.
Furthermore, $\xi$ is strictly stochastically stable if and only if $\Omega_i = \{\xi\}$ and 
$\gamma_i<\gamma_j,\quad\forall j\neq i.$
\end{theorem}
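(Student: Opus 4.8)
The plan is to derive this from the Markov Chain Tree Theorem together with the Freidlin--Wentzell asymptotics implicit in the regular-perturbation structure already set up above. First I would invoke the Markov Chain Tree Theorem, which states that the unique stationary distribution $\mu_\eps$ of the finite irreducible chain $M_\eps$ satisfies, for each $\xi\in\Omega$,
\[
\mu_\eps(\xi) = \frac{q_\eps(\xi)}{\sum_{\xi'\in\Omega} q_\eps(\xi')},\qquad
q_\eps(\xi) = \sum_{T\in\mathscr{T}(\xi)}\ \prod_{(a\to b)\in T} M_\eps(a,b),
\]
where $\mathscr{T}(\xi)$ ranges over all spanning trees of the full transition graph directed toward the root $\xi$. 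This converts the analytic question about $\lim_{\eps\to 0}\mu_\eps$ into a purely combinatorial optimization over trees.

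Second, I would read off the $\eps$-order of each summand. Condition~3 of the regular-perturbation definition gives $M_\eps(a,b)\asymp \eps^{r(a,b)}$ (bounded above and below as $\eps\to 0$), so each tree product scales like $\eps^{r(T)}$ with $r(T)=\sum_{e\in T} r(e)$. Hence $q_\eps(\xi)\asymp \eps^{\rho(\xi)}$, where $\rho(\xi):=\min_{T\in\mathscr{T}(\xi)} r(T)$, and therefore the limit $\pi(\xi)=\lim_{\eps\to 0}\mu_\eps(\xi)$ is strictly positive exactly when $\xi$ minimizes $\rho$ over $\Omega$.

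Third --- the technical crux --- I would reduce the minimization over full-state trees to the minimization over the reduced graph $\mathcal{G}$ on the recurrent classes $\{\Omega_i\}$, by showing $\min_{\xi\in\Omega_i}\rho(\xi)=\gamma_i + C$ for a constant $C$ independent of $i$. Both inequalities are proved by graph surgery. Given an optimal tree on $\mathcal{G}$ rooted at $\Omega_i$, I would lift each reduced edge $(\Omega_j\to\Omega_{j'})$ to a minimal-resistance path in $\Gamma$ realizing $R(j,j')$ and fill in each class interior using the zero-resistance paths guaranteed by properties~1--2, producing a full-state tree of resistance $\gamma_i+C$; conversely, contracting each recurrent class of an optimal full-state tree to a point and deleting its zero-resistance interior edges yields a reduced tree of resistance at most $\rho(\xi)-C$. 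Since $C$ is common to every class, the minimizers of $\rho$ over $\Omega$ are exactly the states lying in classes of minimal stochastic potential $\gamma_i=\min_j\gamma_j$, which is the first claim.

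Finally, for the strict statement I would track when $\pi$ concentrates on a single state. If $\Omega_i=\{\xi\}$ and $\gamma_i<\gamma_j$ for all $j\neq i$, then $\xi$ is the unique minimizer of $\rho$, every competing $q_\eps(\xi')$ is of strictly higher $\eps$-order, and $\mu_\eps(\xi)\to 1$. Conversely, any tie in the minimal $\gamma_j$, or a minimal class containing more than one state (whose zero-resistance internal structure spreads the limiting mass across several states), prevents $\pi$ from being a point mass. The main obstacle is the third step: establishing that optimal full-state trees route all transient states into recurrent classes and fill class interiors at zero resistance, so that all positive resistance is confined to inter-class edges and the full-state optimum collapses exactly onto the reduced-graph optimum defining $\gamma_i$. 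Everything else is bookkeeping on $\eps$-exponents.
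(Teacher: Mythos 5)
The first thing to note is that the paper does not prove this statement at all: the theorem is quoted, with attribution, directly from~\cite{Young1993}, and the surrounding appendix material (regular perturbed processes, resistances, recurrent classes, stochastic potential) is set up only so that the theorem can be \emph{applied} to the three adversarial models. So there is no in-paper proof to compare against; the relevant benchmark is the proof in the cited source, and your proposal reconstructs essentially that argument: the Markov chain tree theorem, the $\epsilon$-order bookkeeping giving $q_\epsilon(\xi)\asymp\epsilon^{\rho(\xi)}$, and the reduction of the full-state-space tree optimization to the reduced graph $\mathcal{G}$ over recurrent classes. Steps 1, 2, and 4 of your plan are sound as written (in step 4, strictness follows because states inside a non-singleton minimal class, and states in tied minimal classes, all attain the minimal exponent $\rho$ and hence all retain positive limiting mass).

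The gap is in the contraction direction of your step 3, which is not ``bookkeeping'' and, as literally described, fails. In an optimal full-state tree rooted at $\xi\in\Omega_i$, the exit paths of two classes $\Omega_1$ and $\Omega_2$ may merge at a transient state $z$ and then traverse a common positive-resistance segment of cost $s$ into a third class $\Omega_3$; this can genuinely occur when every zero-resistance path out of $z$ (guaranteed by property 1) leads back into $\Omega_1$ or $\Omega_2$, so the tree must pay $s$ to route $z$ elsewhere. The full tree pays $a_1+a_2+s$ for this configuration (shared edges counted once), whereas contracting classes to points and keeping the induced edges $(1\to 3)$, $(2\to 3)$ only yields the bound $R(1,3)+R(2,3)\le(a_1+s)+(a_2+s)$, which overshoots by $s$. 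So naive contraction does not produce a reduced tree of resistance at most $\rho(\xi)$; one must re-route---e.g., observe that the zero-resistance path from $z$ enters some class $\Omega_m$, attach one of $\Omega_1,\Omega_2$ to $\Omega_m$ at cost $a_1$ or $a_2$ instead of paying $s$ twice---and then verify the re-routed reduced digraph is still an arborescence into $\Omega_i$. That case analysis (or an equivalent induction) is precisely where the proof in~\cite{Young1993}, following Freidlin and Wentzell, does its real work. You correctly flag step 3 as the crux, which is the right instinct, but the specific surgery you propose there is the part that needs repair rather than mere elaboration; the lifting direction (reduced tree to full tree, via realizing paths and extracting a spanning arborescence from their union) is fine since resistances are nonnegative.
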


\vspace{-3mm}
\subsection{Stability in the presence of a uniformly random adversary}\label{a:mobile random}
The following lemma applies to any graphical coordination game in the presence of a uniformly random adversary with capability $k\leq n-1$.
It states that a uniformly random adversary decreases the resistance of transitions when an agent in $N$ changes its action from $x$ to $y$, but does not change the resistance of transitions in the opposite direction.
Intuitively, this means that viewed through the lends of transition resistances, a uniformly-random adversary spreads $y$ throughout the network optimally, but does nothing to slow the spread of $x$.
In a sense, uniformly-random adversaries are all offense and no defense.

\begin{lemma}\label{lem:mr}
Suppose agents in $N$ update their actions according to log-linear learning in the presence of a uniformly random adversary with capability $k$, where $1\leq k\leq n-1.$
Then the resistance of a transition where agent $i\in N$ changes its action from $x$ to $y$ is:
\begin{align}
r((x,a_{-i})\to (y,a_{-i}) )%\nonumber\\
=  \max\left\{U_i(x,a_{-i}) - U_i(y,a_{-i}) -1,0\right\}
\end{align}
and the resistance of a transition where agent $i\in N$ changes its action from $y$ to $x$ is:
\begin{align}
r((y,a_{-i})\to (x,a_{-i}) )%\nonumber\\
=  \max\left\{U_i(y,a_{-i}) - U_i(x,a_{-i}), 0\right\}.\label{e:BtoA1}
\end{align}
Here $U_i:\mathcal{A}\to\R$, defined in \eqref{e:original utility}, is the utility function for agent $i$ in the uninfluenced game, $G$.  
\end{lemma}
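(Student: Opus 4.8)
The plan is to compute the one-step transition probability under the uniformly random adversary directly from~\eqref{e:LLL dynamics} and~\eqref{e:new utility}, then read off its resistance through the scaling $\eps = e^{-\beta}$. First I would observe that since $1\le k\le n-1$, when agent $i$ is selected to revise its action the random influence set $S\sim{\rm unif}\{\sss_k\}$ contains $i$ with probability
\[
p := \Pr[i\in S] = \frac{\binom{n-1}{k-1}}{\binom{n}{k}} = \frac{k}{n} \in (0,1),
\]
independently of $\beta$ and of the current profile. Conditioning on whether $i\in S$ then expresses the probability that $i$ switches from $x$ to $y$ as a convex combination, with weights $p$ and $1-p$, of the \emph{influenced} log-linear response (in which~\eqref{e:new utility} raises $i$'s effective payoff for $y$ by one) and the \emph{uninfluenced} response. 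The probability of the reverse switch is written the same way, with the numerators for $x$ and $y$ exchanged.

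Next I would use two elementary facts about regular perturbations with $\eps = e^{-\beta}$: a binary log-linear probability of the form $1/(1+e^{\beta\delta})$ has resistance $\max\{\delta,0\}$, and the resistance of a sum of finitely many probabilities carrying state-independent positive coefficients equals the \emph{minimum} of the individual resistances (the largest, slowest-decaying term dominates as $\beta\to\infty$). Writing $\Delta := U_i(x,a_{-i}) - U_i(y,a_{-i})$, the influenced and uninfluenced $x\to y$ branches simplify to $1/(1+e^{\beta(\Delta-1)})$ and $1/(1+e^{\beta\Delta})$, with resistances $\max\{\Delta-1,0\}$ and $\max\{\Delta,0\}$; the corresponding $y\to x$ branches simplify to $1/(1+e^{\beta(1-\Delta)})$ and $1/(1+e^{-\beta\Delta})$, with resistances $\max\{1-\Delta,0\}$ and $\max\{-\Delta,0\}$.

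Because $p,1-p>0$ are constants, each transition's resistance is the minimum of its two branch resistances. For $x\to y$, since $\Delta-1<\Delta$ this minimum is $\max\{\Delta-1,0\} = \max\{U_i(x,a_{-i}) - U_i(y,a_{-i}) - 1,\,0\}$, realized by the influenced branch. For $y\to x$, since $-\Delta\le 1-\Delta$ the minimum is $\max\{-\Delta,0\} = \max\{U_i(y,a_{-i}) - U_i(x,a_{-i}),\,0\}$, realized instead by the \emph{uninfluenced} branch. These are precisely the two claimed formulas.

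The computation is routine; the point requiring care is the asymmetry the lemma records. The adversarial bonus lowers the resistance of the influenced $x\to y$ branch, so $y$ spreads faster, but it \emph{raises} the resistance of the influenced $y\to x$ branch, leaving the uninfluenced branch to dominate and the $y\to x$ resistance unchanged from the adversary-free game --- the ``all offense, no defense'' behavior noted before the lemma. The hypothesis $1\le k\le n-1$ is exactly what forces $0<p<1$ so that both branches survive with positive weight; at $k=n$ the uninfluenced branch would vanish and the $y\to x$ resistance would change. The only nonobvious step is the min-of-resistances rule, which holds here because every summand is a genuine probability, hence nonnegative, so no cancellation can occur.
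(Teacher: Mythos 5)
Your proposal is correct and follows essentially the same route as the paper's proof: condition on whether $i\in S$ (which occurs with probability $k/n$), write the transition probability as a convex combination of the influenced and uninfluenced log-linear responses, and extract the resistance via the scaling $\eps=e^{-\beta}$. The paper compresses your two ``elementary facts'' (binary logit resistance $\max\{\delta,0\}$ and the min-of-resistances rule for positive mixtures) into the phrase ``algebraic manipulations yield,'' so your write-up simply makes explicit the step the paper leaves implicit.
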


\noindent\emph{Proof:}
In the presence of a uniformly random adversary, %the probability that agent $i\in N$ changes its action from $x$ to $y$ is:
\begin{align*}
P_\beta \left((x,a_{-i})\to (y,a_{-i})\right) %\\
= {1\over n} \left({k\over n}\cdot{\exp( \beta(U_i(y,a_{-i})+1))\over \exp( \beta (U_i(y,a_{-i})+1)) + \exp( \beta U_i(x,a_{-i}))}\right.\\
+ \left.{n-k\over n}\cdot{\exp( \beta U_i(y,a_{-i}))\over \exp( \beta U_i(y,a_{-i})) + \exp( \beta U_i(x,a_{-i}))}\right)
\end{align*}
Define $P_\eps\left((x,a_{-i})\to (y,a_{-i})\right)$ by substituting $\eps = e^{-\beta}$ into the above equation.
Algebraic manipulations yield
\begin{equation*}
0<\lim_{\eps\to 0^+}{P_{\eps} \left((x,a_{-i})\to (y,a_{-i})\right)\over \eps^{  U_i(x,a_{-i}) - U_i(y,a_{-i}) -1 }}<\infty,
\end{equation*}
implying 
\begin{align*}
r((x,a_{-i})\to (y,a_{-i}) ) %\nonumber\\
=  \max\left\{U_i(x,a_{-i}) - U_i(y,a_{-i}) -1,0\right\}.
\end{align*}
Equation \eqref{e:BtoA1} may be similarly verified.
\hfill\qed

\vspace{2mm}
\noindent\emph{Proof of Theorem~\ref{thm:ur-gen}: }\label{a:proof mr ss}
By Lemma~\ref{lem:mr}, for any graphical coordination game with graph $G$, the resistance graph associated with log linear learning is the same for all $k\leq n-1$. 
Thus it follows that the set of stochastically stable states is independent of $k$, and thus the susceptibility~$\ay(G,\UR(k))$ is as well.
\hfill\qed

\vspace{2mm}
\noindent\emph{Proof of Theorem~\ref{thm:mr-ring}: }\label{a:proof mr ss}
For any $\alpha\in(0,1)$, $\vec{x}$ and $\vec{y}$ are the only recurrent classes of the unperturbed process $P$.
We call a state transition an \emph{incursion} if it is either $\vec{x}\rightarrow a$ or $\vec{y}\rightarrow a$.
Let $a,a'\in\aaa$ differ in the action of a single agent, and suppose that $a\notin\{\vec{x},\vec{y}\}$; that is, $a\rightarrow a'$ is \emph{not} an incursion.
Lemma~\ref{lem:mr} gives that $r(a\rightarrow a')=0$.
On the other hand, all incursions have positive resistance: $r(\vec{x}\rightarrow a)=2$, and $r(\vec{y}\rightarrow a)=1+2\alpha$.
Thus, there is a sequence of $0$-resistance transitions from any $a\in\aaa$ to both $\vec{x}$ and $\vec{y}$, but every transition out of one of these states has positive resistance -- implying that they are the only recurrent classes of $P$.

Since all transitions other than incursions have $0$ resistance, the problem of finding the stochastic potential of $\vec{x}$ boils down to measuring the resistance of \emph{leaving} $\vec{y}$, and vice versa.
That is, $R(\vec{x}\rightarrow\vec{y})=1+2\alpha$ and $R(\vec{y}\rightarrow\vec{s})=2$.
Thus,
\begin{align}
\alpha<1/2\ &\implies \ R(\vec{x}\rightarrow\vec{y})<R(\vec{y}\rightarrow\vec{s}) \\
&\ \ \mbox{and} \nonumber \\
\alpha>1/2\ & \implies \ R(\vec{x}\rightarrow\vec{y})>R(\vec{y}\rightarrow\vec{s}),
\end{align}
yielding the proof.
\hfill\qed

\vspace{-2mm}
\subsection{Stability in the presence of a mobile intelligent adversary}\label{a:intelligent proof}

Similar to Lemma~\ref{lem:mr}, in Lemma~\ref{lem:mi} we provide a characterization of the resistances of state transitions in the presence of a mobile intelligent adversary.
Naturally, these resistances are a function of the adversary's policy, and thus unlike the resistances due to a uniformly random adversary, they do depend implicitly on $k$.

\begin{lemma}\label{lem:mi}
Suppose agents in $N$ update their actions according to log-linear learning in the presence of a mobile intelligent adversary with capability $k\in\{1,\ldots,n-1\}$ and policy $S:\aaa\rightarrow \sss_k$.
Then the resistance of a transition where agent $i\in N$ changes its action from $x$ to $y$ in the presence of policy $S$ is:
\begin{align}
r^S((x,a_{-i})\to (y,a_{-i}) )%\nonumber\\
&=  \max\left\{\tilde{U}_i(x,a_{-i},S(a)) - \tilde{U}_i(y,a_{-i},S(a)),0\right\}
\end{align}
and the resistance of a transition where agent $i\in N$ changes its action from $y$ to $x$ in the presence of policy $S$ is:
\begin{align}
r^S((y,a_{-i})\to (x,a_{-i}) )%\nonumber\\
=  \max\left\{\tilde{U}_i(y,a_{-i},S(a)) - \tilde{U}_i(x,a_{-i}S(a)), 0\right\}.\label{e:BtoA1}
\end{align}
Here $\tilde{U}_i:\mathcal{A}\to\R$, defined in \eqref{e:new utility}, is the utility function for agent $i$ in the influenced game $\tilde{G}$.  
\end{lemma}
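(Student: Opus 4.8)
The plan is to mirror the proof of Lemma~\ref{lem:mr}, exploiting the fact that the adversary's state-dependence is frozen during any single transition. When agent $i$ is selected to revise from the current state $a=(a_i,a_{-i})$, the mobile intelligent adversary commits to the influence set $S(a)$, which is a \emph{fixed} subset determined entirely by the pre-transition state $a$. Hence agent $i$ revises by log-linear learning~\eqref{e:LLL dynamics} applied to the modified utility $\tilde{U}_i(\,\cdot\,,\,\cdot\,,S(a))$ of~\eqref{e:new utility}, and the transition probability out of $(x,a_{-i})$ is a single logistic,
\begin{equation*}
P_\beta\!\left((x,a_{-i})\to(y,a_{-i})\right)
= \frac{1}{n}\cdot\frac{\exp\!\left(\beta\,\tilde{U}_i(y,a_{-i},S(a))\right)}{\exp\!\left(\beta\,\tilde{U}_i(x,a_{-i},S(a))\right)+\exp\!\left(\beta\,\tilde{U}_i(y,a_{-i},S(a))\right)},
\end{equation*}
with $a=(x,a_{-i})$.

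Next I would substitute $\eps=e^{-\beta}$ and extract the resistance exactly as in Lemma~\ref{lem:mr}: factoring out the dominant exponential as $\eps\to 0^+$ shows that the ratio $P_\eps/\eps^{\,\tilde{U}_i(x,a_{-i},S(a))-\tilde{U}_i(y,a_{-i},S(a))}$ has a strictly positive finite limit when the $x$-utility dominates, and that the probability tends to a positive constant (resistance $0$) otherwise. This gives $r^S((x,a_{-i})\to(y,a_{-i}))=\max\{\tilde{U}_i(x,a_{-i},S(a))-\tilde{U}_i(y,a_{-i},S(a)),0\}$; the reverse transition~\eqref{e:BtoA1} follows by swapping the roles of $x$ and $y$.

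The chief structural difference from Lemma~\ref{lem:mr} — and the only point requiring care — is that here there is \emph{no} convex combination over influence-set realizations. In the uniformly random case the transition probability is a $\tfrac{k}{n}$/$\tfrac{n-k}{n}$ mixture of an influenced and an uninfluenced logistic, and the resistance is the minimum of the two exponents, which both erases $k$ and produces the $-1$; for the mobile intelligent adversary the single fixed set $S(a)$ yields one logistic, so the resistance depends directly on whether $i\in S(a)$ with $a_i=y$, hence implicitly on the policy and on $k$. Before invoking the machinery of Appendix~\ref{a:resistance trees} I would also check that the state-dependent policy still defines a regular perturbation: aperiodicity and irreducibility hold because every single-coordinate transition probability remains strictly positive at each finite $\beta$ (the modified utilities are finite, so no exponential vanishes), and the best-response chain is recovered as $\eps\to 0^+$. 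I do not expect a genuine obstacle here; the whole risk is the bookkeeping of evaluating $S(a)$ at the correct (originating) state.
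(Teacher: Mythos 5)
Your proposal is correct and follows essentially the same route as the paper: the paper's proof likewise reduces to the argument of Lemma~\ref{lem:mr}, observing that the deterministic policy $S(a)$ replaces the $\tfrac{k}{n}$/$\tfrac{n-k}{n}$ mixture of logistics with a single logistic in $\tilde{U}_i(\cdot,\cdot,S(a))$, which is exactly why the $-1$ asymmetry between $x\to y$ and $y\to x$ transitions disappears. Your additional check that the perturbed chain remains regular is a harmless elaboration of the same argument, not a different approach.
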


\begin{proof}
This proof proceeds in exactly the same fashion as that of Lemma~\ref{lem:mr}, except that since a mobile intelligent adversary's policy is deterministic, if $i\in S$, the adversary modifies $U_i$ with probability $1$.
Thus, the asymmetry between $x\to y$ and $y\to x$ transitions that appeared in Lemma~\ref{lem:mr} vanishes.
\end{proof}

For ring graphs, Table~\ref{tab:resist} enumerates several of the key transition resistances experienced under the influence of mobile intelligent adversaries.

Next, for the special case of $k=2$, the following lemma fully characterizes the recurrent classes of the unperturbed best response process $P^{\bar{s}}$ associated with an arbitrary mobile intelligent adversary's policy $\bar{S}$.

\begin{lemma} \label{lem:k2}
For any ring graph $G\in\geer$, let $k=2$, let $\sbar$ be any adversary policy, and $P^{\bar{S}}$ be the unperturbed best response process on $G$ associated with $\bar{S}$.
Every recurrent class of $P^{\bar{S}}$ is a singleton.
Furthermore, an action profile $\abar$ (other than $\vec{x}$ and $\vec{y}$) is a recurrent class of $P^{\bar{S}}$ if and only if the following two conditions are both satisfied:
\begin{enumerate}
\item $\abar$ has a single contiguous chain of agents $[i,j]$ playing $x$: $\abar_i=\abar_{i+1}=\cdots=\abar_j=x$ such that $j-i\in[1,n-3]$.
\item $\sbar(\abar) = \{i-1,j+1\}$.
\end{enumerate}
\end{lemma}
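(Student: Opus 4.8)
The plan is to work entirely within the resistance-tree framework of Appendix~\ref{a:resistance trees}, identifying the recurrent classes of $P^{\sbar}$ as the absorbing states of the unperturbed best-response dynamics and then arguing that these are the only recurrent structures. Throughout I restrict to the relevant regime $\alpha\in(0,1)$ (which contains $\ay=2/3$); there every best response is \emph{strict}, so no transition arises from indifference, and consequently a state is absorbing if and only if every agent, given the current influence set $\sbar(\abar)$, strictly prefers its own action. The engine of the whole argument is the per-agent comparison supplied by Lemma~\ref{lem:mi} together with the ring payoffs $U_i(x,a_{-i})=(1+\alpha)n_x$ and $U_i(y,a_{-i})=2-n_x$, where $n_x\in\{0,1,2\}$ counts an agent's neighbors playing $x$. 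One first notes that $\vec{x}$ (all $n_x=2$) and $\vec{y}$ (all $n_x=0$) are absorbing.

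First I would tabulate the strict preferences (this is exactly the content of Table~\ref{tab:resist}): an $x$-agent with no $x$-neighbor always switches to $y$; an $x$-block endpoint ($n_x=1$) keeps $x$ iff it is \emph{uninfluenced}; an interior $x$-agent ($n_x=2$) always keeps $x$; a $y$-agent with at least one $x$-neighbor keeps $y$ iff it \emph{is} influenced and has $n_x=1$; and a $y$-agent with no $x$-neighbor always keeps $y$. From these rules I would read off the absorbing states. Any $x$-agent must have an $x$-neighbor, so the $x$-players form maximal blocks of length $\geq 2$; each block endpoint must be uninfluenced; and each $y$-agent flanking a block (necessarily $n_x=1$) must be influenced, since an uninfluenced such agent strictly prefers $x$. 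Two $x$-blocks cannot be separated by a single $y$-agent, because that agent would have $n_x=2$ and switch to $x$ even when influenced; hence distinct blocks have distinct flanks, so $m$ blocks force $2m$ agents into the influence set. Since $|\sbar(\abar)|=2$, a non-trivial absorbing state has exactly one $x$-block $[i,j]$ with $\sbar(\abar)=\{i-1,j+1\}$; the block length $\geq 2$ gives $j-i\geq1$, and the requirement that the two flanks be distinct $y$-agents (so the $y$-region has size $\geq2$) gives $j-i\leq n-3$. Conversely, a direct check of all five rules confirms every such profile is absorbing. This establishes conditions (1)--(2).

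It remains to show \emph{every} recurrent class is a singleton, equivalently that from any non-absorbing state there is a zero-resistance (positive-probability) best-response path to some absorbing state; then the standard finite-chain fact --- a closed communicating class containing a reachable absorbing state must equal that absorbing singleton --- finishes the proof. I would argue constructively: from $\abar\notin\{\vec{x},\vec{y}\}$, first switch every isolated $x$-agent to $y$ (each strictly prefers $y$, and doing so creates no new isolated $x$), then repeatedly let any $y$-agent that strictly prefers $x$ switch, which strictly increases the number of $x$-agents and merges or extends blocks without producing isolated $x$'s. Since the $x$-population is bounded by $n$, this process halts; I would then show the only way it can halt at a non-absorbing profile is a single $x$-block whose two flanks are both influenced --- but that profile is precisely one of the absorbing states characterized above (or, if the block has length $1$, a lone $x$ that the first phase removes on its way to $\vec{y}$). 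Hence the path always terminates at an absorbing state.

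The main obstacle is this last reachability step. Because a mobile intelligent adversary's policy $\sbar$ is an arbitrary state-dependent map, the influenced dynamics are \emph{not} a potential game, so one cannot simply invoke monotonicity of a global potential to rule out best-response cycles; the number of $x$-agents is not globally monotone either, since the adversary can relocate its two influence tokens as the state moves. The care therefore lies in choosing the update order so that progress toward an absorbing configuration is monotone within each phase, and in verifying that the two available influence slots can never be arranged to sustain a cycle --- which ultimately reduces to the same counting bound ($2m\leq 2$) that drives the absorbing-state characterization.
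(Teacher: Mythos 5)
Your proof is correct, and its overall architecture matches the paper's: both use the per-agent resistance characterization (Lemma~\ref{lem:mi} / Table~\ref{tab:resist}) to identify the absorbing states of the unperturbed process, and then argue that every other state is transient. Your derivation of conditions (1)--(2) is a forward constraint-plus-counting argument (each $x$-block needs two distinct influenced flanks, so $2m\leq 2$ forces a single block), whereas the paper argues the contrapositive by exhibiting a zero-resistance exit from every profile violating (1) or (2); these are equivalent in content. The genuine difference is in the singleton-recurrence step. The paper disposes of it in one sentence, asserting that every state other than $\vec{x}$, $\vec{y}$, or one satisfying (1) can be driven to $\vec{x}$ or $\vec{y}$ by zero-resistance transitions. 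That assertion is actually too strong: the adversary's policy can funnel the zero-resistance dynamics into a \emph{mixed} absorbing state. For instance, on a ring with $n=8$, take blocks $[1,2]$ and $[4,5]$ with $\sbar$ influencing $\{6,8\}$ at that state; the unique zero-resistance move is agent $3$ switching to $x$, and if $\sbar$ at the merged-block state $[1,5]$ influences $\{6,8\}$, the path terminates there, never reaching $\vec{x}$ or $\vec{y}$. What the lemma actually needs --- and what you prove --- is the weaker statement that every non-absorbing state has a zero-resistance path to \emph{some} absorbing state, which suffices for transience since no return path exists. Your two-phase construction (first retire isolated $x$'s, then let willing $y$-agents switch, with the halting state analyzed by the same $2m\leq2$ count) establishes exactly this, and the monotonicity of the $x$-population within phase 2 guarantees termination. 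So your write-up is both more self-contained and, on this step, more accurate than the paper's; the paper's version buys brevity at the cost of a claim that is false as literally stated, though its conclusion is unaffected.
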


\begin{proof}
At a state $\abar$ and policy satisfying  1) and 2), either agent $i$ or $j$ can switch to $y$ with a resistance of $\alpha$, or agent $i-1$ or $j+1$ can switch to $x$ with a resistance of $1-\alpha$.
All other transitions have higher resistance than these; this demonstrates that $\abar$ is a singleton recurrent class.
We will show the contrapositive to complete the proof of 1) and 2).
If $\abar$ has more than one contiguous chain of agents playing $x$, then either there is an agent playing $y$ with two neighbors playing $x$, or there is an uninfluenced agent playing $y$ with a neighbor playing $x$.
Thus, that agent deviating to $x$ is a $0$-resistance transition to another state (see Table~\ref{tab:resist}) so $\abar$ cannot be a singleton recurrent class.
If $\abar$ has a unique contiguous chain of $x$ of length $1$ or $n-1$, then there is a $0$-resistance transition to $\vec{y}$ or $\vec{x}$, respectively.
If 1) is satisfied but 2) is not, then there is an uninfluenced agent playing $y$ with a neighbor playing $x$; again a $0$-resistance transition to another state.

To see that every recurrent class is a singleton, simply note that any state other than $\vec{x}$, $\vec{y}$, or one satisfying 1) can be transformed into either~$\vec{x}$ or~$\vec{y}$ by a finite sequence of $0$-resistance transitions.
\end{proof}

Naturally, there are a vast number of policies which a mobile intelligent adversary can employ; we propose a family of such policies for ring graphs; we say that any policy in this family is a \emph{balanced policy.}

\begin{table}
\setlength{\extrarowheight}{1pt}
\caption{ Summary of transition resistances derived from Lemma~\ref{lem:mi}}
\begin{center}
	\begin{tabular}{c|c||c|c}\label{tab:resist}
	\# nbrs with $a_j=a_i$ & $i\in S$ & $r^S(x\to y)$  & $r^S(y\to x)$  \\ \hline
	\multirow{2}{*}{0}		& no 		& $2+2\alpha$ 		& $2$ \\
						& yes	& $1+ 2\alpha$		& $3$ \\ \hline
	\multirow{2}{*}{1}		& no 		& $\alpha$ 		& $0$ \\
						& yes	& $0$			& $1-\alpha$ \\ \hline
	\multirow{2}{*}{2}		& no 		& $0$	 		& $0$ \\
						& yes	& $0$			& $0$ \\ \hline
	\end{tabular}
	\vspace{-2mm}
\end{center}
\end{table}

\begin{defn}[Balanced Policy]\label{def:optpolicy}
Let $G\in\geer$ be a ring graph with $n$ nodes.
For any state $a\in\aaa,$ let $[i,j]$ be the longest chain of agents playing $x$ (break ties lexicographically).
For a type $\MI(k)$ agent, policy $S:\aaa\rightarrow\sss_k$ is \emph{balanced} if it satisfies the following conditions.%
\footnote{
Note that Definition~\ref{def:optpolicy} does not always specify the location of every adversary.
Thus, there is a large family of policies satisfying Definition~\ref{def:optpolicy}.
}
\begin{enumerate}
\item If $j-i>1$, let $i\in S(a)$. If additionally $k\geq3$, let $\{i-1,i,j+1\}\subseteq S(a)$.
\item If $k\geq2$ and $j=i$, let $\{i-1,i+1\}\subseteq S(a)$.
\end{enumerate}
\end{defn}

\noindent See Figure~\ref{fig:balanced} for a graphical depiction of the key elements of this policy.
The key idea is that there should always be one adversary ``attacking'' (red circles in Figure~\ref{fig:balanced}) an $x$ who has a $y$ neighbor, and if there are enough adversaries, the longest contiguous chain of $x$'s should always be surrounded by a pair of ``defensive'' adversaries (green circles in Figure~\ref{fig:balanced}).
That is, this policy specifies the placement of no more than $3$ adversaries; the placement of any remaining ``indeterminate'' adversaries (dark gray circles in Figure~\ref{fig:balanced}) is of no importance to the results of Theorem~\ref{thm:MIring}.

\vspace{2mm}
\noindent We can now proceed with the proof of the theorem.

\noindent\emph{Proof of Theorem~\ref{thm:MIring}:}
We will show that under a balanced policy, if $\alpha$ is less than each of the susceptibilities shown in~\eqref{eq:suscepMI} that $\vec{y}$ is strictly stochastically stable; subsequently, we will show that no policy can outperform a balanced policy.
In the following, we write $R^S(a,a')$ to denote the resistance of a transition from $a$ to $a'$ in the presence of adversary policy $S:\aaa\to\sss_k$.

\subsubsection*{Susceptibility to a balanced policy}
Let $G = (N,E)$ be a ring graph influenced by a mobile intelligent adversary with capability $k$ using a balanced policy $\sbal$ satisfying Definition~\ref{def:optpolicy}.
As in the proof of Theorem~\ref{thm:mr-ring}, only $\vec{x}$ and $ \vec{y}$ are recurrent in the unperturbed process $P^{\sbal}$, because at every state $a\in\aaa$, there is a sequence of $0$-resistance transitions leading to either $\vec{x}$ or $ \vec{y}$, as can be verified by the resistances listed in Table~\ref{tab:resist}.

First consider $R^\sbal(\vec{x}\rightarrow\vec{y})$.
Lemma~\ref{lem:mi} gives that the incursion of $y$ into state $\vec{x}$ (at an agent being influenced by an adversary) has resistance $r^\sbal(\vec{x}\rightarrow a)=1+2\alpha$.
Then, for any state $a\notin\{\vec{x},\vec{y}\}$, there is always at least one agent playing $x$ who has either two neighbors playing $y$, or a neighbor playing $y$ and is connected to an adversary.
Thus, there is always a sequence of transitions from $a$ to $\vec{y}$ with a total resistance of $0$, so $R^\sbal(\vec{x}\rightarrow\vec{y})=1+2\alpha$.

%\addtolength{\textheight}{-4.75in}   	% This command serves to balance the column lengths
                                  				% on the last page of the document manually. It shortens
                                 				% the textheight of the last page by a suitable amount.
                                  				% This command does not take effect until the next page
                                  				% so it should come on the page before the last. Make
                                  				% sure that you do not shorten the textheight too much.

Next consider $R^\sbal(\vec{y}\rightarrow\vec{x})$.
Whenever $k<n$, there is always at least one agent that is not being influenced by the adversary; thus Lemma~\ref{lem:mi} gives that an incursion of $x$ into $\vec{y}$ has a resistance of $2$.
If $k=1$, a balanced policy does not allow the adversary to ``play defense''; so there is a sequence of subsequent $y\rightarrow x$ transitions that each have $0$ resistance.
Thus when $k=1$ with a balanced policy, the situation is identical to that of the uniformly-random adversary, and $\vec{y}$ is strictly stochastically stable whenever $\alpha<1/2$.

If $k\geq2$, the first $y\rightarrow x$ transition \emph{after} the incursion now has a positive resistance of at least $1-\alpha$.
If $k=2$, the policy does not allow the adversary to protect against further spread of $x$, so we have that $R^\sbal(\vec{y}\rightarrow\vec{x})=3-\alpha$.
That is, whenever $\alpha<2/3$, we have that $R^\sbal(\vec{x}\rightarrow\vec{y})<R^\sbal(\vec{y}\rightarrow\vec{x})$ so $\vec{y}$ is strictly stochastically stable.

\begin{figure}
% this figure generated by https://github.ucsb.edu/philipbrown/adversary-python/blob/master/plotscript.py
  \centering
    \includegraphics[scale=.15]{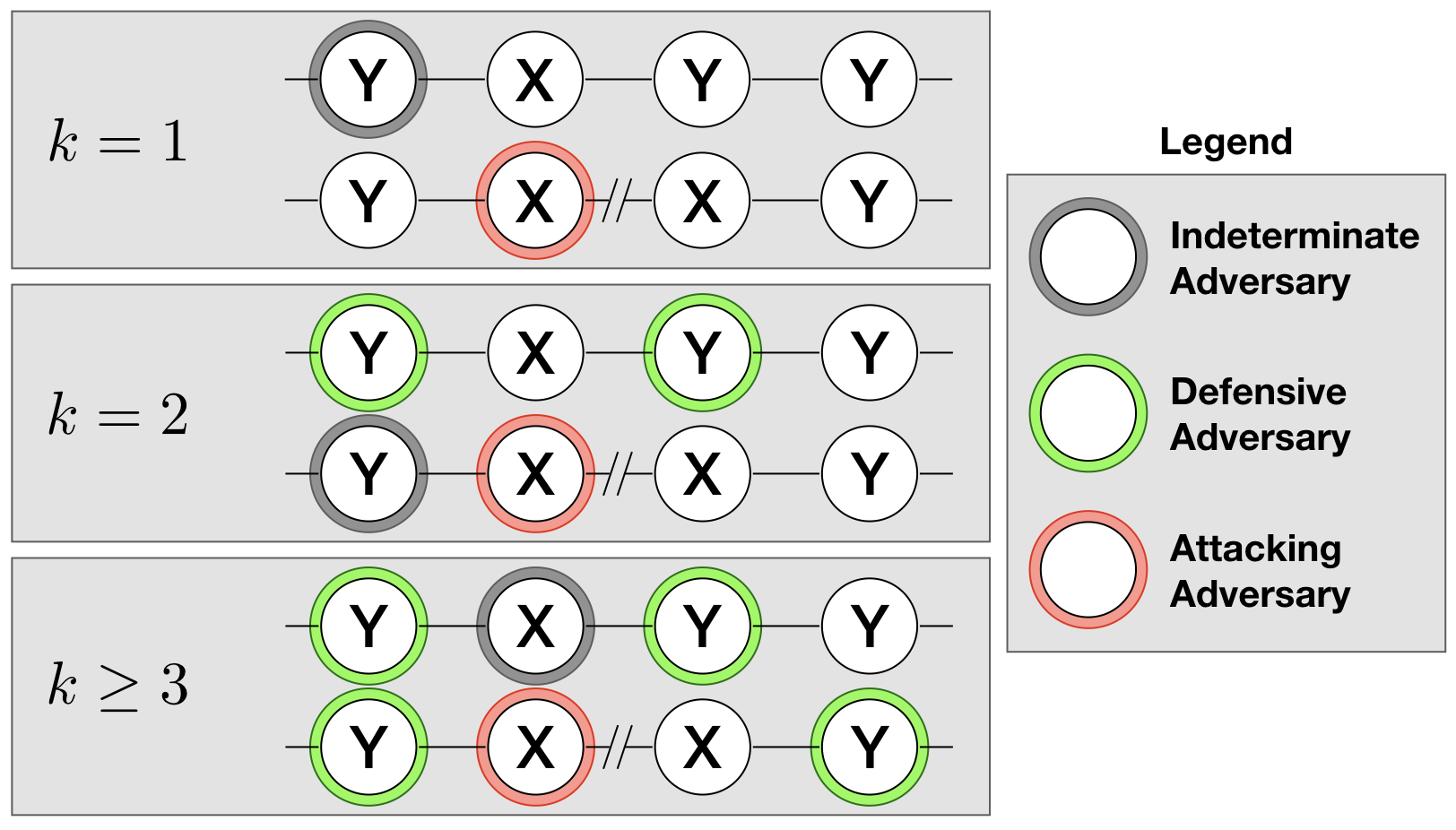}
  \caption{Graphical depiction of a balanced mobile intelligent adversary policy as defined by Definition~\ref{def:optpolicy}.
  There are essentially three strategies of adversarial influence: Indeterminate (gray), Defensive (green), or Attacking (red).
  For each $k$ depicted above, the upper chain of agents depicts the special case that a single agent is playing $x$; the lower depicts the longest contiguous chain in the graph of agents playing $x$.
  ``Attacking'' means influencing an agent $i$ with $a_i=x$ and a neighbor playing $y$.
  ``Defensive'' means influencing the two agents playing $y$ who are at the ends of the longest chain of $x$.
  ``Indeterminate'' means that the adversary's choice of which agent to influence has no impact on the stochastically-stable states of the system.
  Note that for all $k\geq1$, a balanced policy requires one of the $k$ adversaries to be attacking whenever the longest chain of $x$ is longer than $1$.
  For $k=2$, in the special case that the longest chain of $x$ is length $1$, both adversaries are defensive.
  For $k\geq 3$, there are always enough adversaries that two can be defensive and one attacking.}
  \label{fig:balanced}
\end{figure}

If $k\geq3$, the policy always defends against further spread of the longest contiguous chain of agents playing $x$.
Consider the $\vec{y}\rightarrow\vec{x}$ path given by $\{1,2,\ldots,n\}$; under a balanced policy, this path has resistance $2+(n-2)(1-\alpha).$
Note that any alternative $\vec{y}\rightarrow\vec{x}$ path has resistance no less than $4$, as it would require at least two $y\rightarrow x$ transitions of agents who have no neighbors playing $x$.
For any $\alpha<3/2$, $R^\sbal(\vec{x}\rightarrow\vec{y})<4$; thus, the only relevant situation is when $R^\sbal(\vec{y}\rightarrow\vec{x})=2+(n-2)(1-\alpha)$.
When this is the case, whenever $\alpha<(n-1)/n$, it follows that $R^\sbal(\vec{x}\rightarrow\vec{y})<R^\sbal(\vec{y}\rightarrow\vec{x})$ so $\vec{y}$ is strictly stochastically stable.

\subsubsection*{A balanced policy is optimal}
Whenever $k<n$, for any ring graph $G$, it is clear that $\ay(G,\MI(k))\leq\ay(G,\MI(n-1))$ because adding additional adversaries can only decrease $R(\vec{x}\rightarrow\vec{y})$ and/or increase $R(\vec{y}\rightarrow\vec{x})$.
Furthermore, it always holds that $\ay(G,\MI(n-1))\leq\ay(G,\FI(n-1))=(n-1)/n$.
This is because mobile intelligent adversaries are strictly more capable than fixed intelligent.
Thus, the susceptibility in~\eqref{eq:suscepMI} is tight for the case of $k\geq3$.

When $k=1$, for any adversary policy $S_1$ of type $\MI(1)$, the only recurrent classes are $\vec{x}$ and $\vec{y}$.
This is because the adversary does not have enough capability to ``play defense'': at every state $a$ such that some agent is playing $x$, then regardless of the adversary's policy, there is an agent playing $y$ that can switch to $x$ with $0$ resistance (see Table~\ref{tab:resist}).
That is, there is a $0$-resistance transition to a state with strictly more agents playing $x$, showing that $R^{S_1}(\vec{y}\to\vec{x})= R^\sbal(\vec{y}\rightarrow\vec{x})$.
It has already been shown that $R^{S_1}(\vec{x}\rightarrow\vec{y})\geq R^\sbal(\vec{x}\rightarrow\vec{y})$, and thus it is proved that a balanced policy with $k=1$ is optimal.

When $k=2$, the situation is more challenging, because there are too few adversaries for us to appeal to Theorem~\ref{thm:ringFI}'s upper bound, but enough adversaries that the unperturbed process may have a multiplicity of recurrent classes.
Note that to show that $\ay(G,\MI(2))\leq2/3$, it suffices to show that when $\alpha=2/3$, $\vec{y}$ can never be strictly stochastically stable for any adversary policy.

Let $S_2$ be any adversary policy with $k=2$, and suppose there are $m$ states satisfying both conditions of Lemma~\ref{lem:k2} (we call these the ``mixed'' recurrent classes; in each, some agents are playing $x$ and some $y$).

When $\alpha=2/3$, the minimum-resistance tree rooted at $\vec{x}$ has total resistance no more than $2+(m+1)(1-\alpha)$: the resistance of leaving $\vec{y}$ is $2$, the resistance of leaving each of the $m$ mixed recurrent classes is $1-\alpha$, and there can be up to an additional resistance-$1-\alpha$ transition from $\vec{y}$ to a state with $2$ agents playing $x$.
Thus, the stochastic potential of $\vec{x}$ as a function of $\alpha$ is
\begin{equation}
\gamma_{\vec{x}}(\alpha)= 2+(m+1)(1-\alpha). \label{eq:gx}
\end{equation}

Let $\abar$ denote a mixed recurrent class, and with abuse of notation, also the state associated with that class.
When $\alpha=2/3$, for any other state $a'$ accessible from $\abar$, Table~\ref{tab:resist} gives us that $r^{S_2}(\abar\to a')\geq 1-\alpha$.
Thus, for any other recurrent class $a^\dagger$, we have 
\begin{equation}
R^{S_2}\left(\abar,a^\dagger\right)\geq 1-\alpha. \label{eq:Rbound}
\end{equation}

Now, let $T$ be the minimum-resistance tree rooted at $\vec{y}$.
For any recurrent class $a^\dagger$, $R^{S_2}(\vec{x},a^\dagger)\geq1+2\alpha$, so it follows from~\eqref{eq:Rbound} that the stochastic potential of $\vec{y}$ as a function of $\alpha$ is lower bounded by
\begin{align}\label{eq:gy}
\gamma_{\vec{y}}(\alpha) = R^{S_2}(T) 		&\geq 1+2\alpha + m (1-\alpha). 
\end{align}

It can be readily seen that for all $\alpha>2/3$, $\gamma_{\vec{x}}(\alpha) < \gamma_{\vec{y}}(\alpha)$, indicating that $\vec{x}$ is strictly stochastically stable, and showing that $\ay(G^{\rm r},\MI(2))\leq2/3$.
\hfill\qed

\end{document}